\def\set@curr@file#1{%
  \begingroup
    \escapechar\m@ne
    \xdef\@curr@file{\expandafter\string\csname #1\endcsname}%
  \endgroup
}
\def\quote@name#1{"\quote@@name#1\@gobble""}
\def\quote@@name#1"{#1\quote@@name}
\def\unquote@name#1{\quote@@name#1\@gobble"}
\theoremstyle{plain}
\newtheorem{theorem}{Theorem}
\theoremstyle{plain}
\theoremstyle{plain}
\newtheorem{lemma}{Lemma}
\theoremstyle{plain}
\theoremstyle{definition}
\newtheorem{assumption}{Assumption}
\theoremstyle{definition}
\newtheorem{definition}{Definition}
\theoremstyle{remark}
\newtheorem{remark}{Remark}
\theoremstyle{remark}
\newtheorem{example}{Example}
\Crefname{equation}{Equation}{Eqs.}
\newcommand{\ubar}[1]{\underaccent{\bar}{#1}}
\DeclareMathOperator*{\argmin}{arg\,min}
\title{\LARGE \bf
Robust Adaptive Control Barrier Functions: \\
An Adaptive \& Data-Driven Approach to Safety  \\
(Extended Version)
}
\author{Brett T. Lopez$^{1}$, Jean-Jacques E. Slotine$^{2}$, and Jonathan P. How$^{1}$
\thanks{$^{1}$Aerospace Control Laboratory, Massachusetts Institute of Technology, Cambridge MA, {\tt\small \{btlopez,jhow\}@mit.edu}}
\thanks{$^{2}$Nonlinear Systems Laboratory, Massachusetts Institute of Technology, Cambridge MA, {\tt\small jjs@mit.edu}}
}
\begin{document}

\maketitle
\thispagestyle{empty}
\pagestyle{empty}

\begin{abstract}
A new framework is developed for control of constrained nonlinear systems with structured parametric uncertainties.
Forward invariance of a safe set is achieved through online parameter adaptation and data-driven model estimation.
The new adaptive data-driven safety paradigm is merged with a recent adaptive control algorithm for systems nominally contracting in closed-loop.
This unification is more general than other safety controllers as closed-loop contraction does not require the system be invertible or in a particular form.
Additionally, the approach is less expensive than nonlinear model predictive control as it does not require a full desired trajectory, but rather only a desired terminal state.
The approach is illustrated on the pitch dynamics of an aircraft with uncertain nonlinear aerodynamics.
\end{abstract}

\begin{keywords}
Adaptive control, barrier functions, contraction analysis, contraction metrics, uncertain systems.
\end{keywords}

\section{Introduction}
State and actuator constraints are often encountered in real-world systems but systematic feedback controller design remains challenging.
The main difficulty arises from needing to \emph{predict} whether the system will remain in the feasible set when selecting a control input.
Repeatedly solving a constrained finite-horizon optimal control problem (i.e., model predictive control) is one way to ensure feasibility, 
but solving a nonlinear optimization in real-time can be difficult.
Alternatively, one can avoid trajectory optimization entirely by constructing safe invariant sets, i.e., a set of states that guarantee feasibility indefinitely.
Through the development of control barrier functions (CBFs)~\cite{ames2016control}, safe stabilizing controllers can be synthesized by simply solving a quadratic program (QP), and has recently been used in several applications~\cite{ames2019control}.
However, model error can significantly degrade the performance of these controllers to the extent that safety may no longer be guaranteed.
We develop a general framework that guarantees safety through parameter adaptation and online model estimation for uncertain nonlinear systems.

Control barrier functions heavily rely on a model so it is critical to develop methodologies that maintain safety for uncertain systems.
In~\cite{xu2015robustness} the so-called zeroing CBFs were shown to be Input-to-State stable, a property that was used to prove a superset of a safe set is forward invariant.
The size of the superset was characterized in~\cite{kolathaya2018input} by introducing Input-to-State safety.
Stronger safety guarantees can be obtained via robust optimization as demonstrated in~\cite{gurriet2018towards}.
Learning-based methods~\cite{fan2019bayesian,taylor2019learning} have been developed to address the conservatism of the robust strategies, but these can require extensive offline training to substantially improve the model.
Adaptive CBFs (aCBFs)~\cite{taylor2019adaptive} use ideas from adaptive control theory to ensure a safe set is forward invariant with online parameter adaptation.
However, aCBFs have a much more restrictive invariance condition that limits the system to remain in sublevel sets of the safety set, ultimately leading to conservative behavior.
To address conservatism, \cite{taylor2019adaptive} limited parameter adaptation to a region near the boundary of the safe set.
However, the resulting safety controller is not necessarily Lipschitz and can exhibit closed-loop chattering.

The contributions of this work are threefold.
First, robust aCBFs (RaCBFs) are defined and shown to guarantee safety for uncertain nonlinear systems.
When combined with parameter adaptation, RaCBFs ensure forward invariance of a \emph{tightened set} where the degree of tightening can be selected based on the desired conservatism.
RaCBFs are far less conservative than aCBFs and results in a locally Lipschitz safety controller.
Second, RaCBFs are combined with the data-driven method set membership identification \cite{tanaskovic2014adaptive} to safely reduce modeling error and expand the set of allowable states. 
This is the first work to utilize both parameter adaptation and data-driven model estimation within the context of safety-critical control. 
And third, RaCBFs are merged with a recent direct adaptive controller \cite{lopez2019contraction} based on the contraction metric framework \cite{lohmiller1998contraction,manchester2017control}.
Contraction expresses distance-like functions \emph{differentially} rather than explicitly, and the existence of a metric only requires stabilizability -- a far weaker condition than those needed for feedback linearization or backstepping -- making the unification more general than existing methods.
The approach is demonstrated on the pitch dynamics of an aircraft with uncertain nonlinear aerodynamics.
The system is non-invertible, not in strict-feedback form, and has non-polynomial dynamics highlighting the generality of the proposed method.

\section{Problem Formulation \& Preliminaries}
Consider the nonlinear system
\protect\begin{equation}
\dot{x} = f(x) - \Delta(x)^\top \theta + B(x)u,
\label{eq:unc_dyn}
\end{equation}
with unknown parameters $\theta \in \mathbb{R}^p$ and known dynamics $\Delta : \mathbb{R}^n \rightarrow \mathbb{R}^{p\times n} $, state $x \in \mathbb{R}^n$, control input $u \in \mathbb{R}^m$, nominal dynamics $f: \mathbb{R}^n \rightarrow \mathbb{R}^{n}$, and control input matrix $B : \mathbb{R}^n \rightarrow \mathbb{R}^{n\times m}$ with columns $b_i(x)$ $i=1,\dots,m$.
Let $\mathcal{S}^n_+$ be the set of all $n\times n$ symmetric positive definite matrices. 
A smooth Riemannian manifold $\mathcal{M}$ is equipped with a smooth Riemannian metric $M: \mathbb{R}^n\times\mathbb{R} \rightarrow \mathcal{S}^n_+$ that defines an inner product $\left< \cdot, \cdot \right>_x$ on the tangent space $T_x \mathcal{M}$ at every point $x$.
The metric $M(x,t)$ defines local geometric notions such as angles, length, and orthogonality.
The directional derivative of metric $M(x,t)$ along vector $v$ is $\partial_v M = \sum_i\frac{\partial M}{\partial x_i} v_i$.
A parameterized differentiable curve $c: \left[0,1\right] \rightarrow \mathcal{M}$ is a regular if $\frac{\partial c}{\partial s} = c_s \neq 0$ $\forall s \in \left[0~1\right]$.
Let $\Upsilon(p,q)$ be the family of curves connecting $p,~q\in \mathcal{M}$, then a \emph{geodesic} $\gamma: \left[0~1\right] \rightarrow \mathcal{M}$ is the extremum of the energy functional
\begin{equation*}
    \gamma(s) = \underset{c(s)\in\Upsilon(p,q)}{\argmin}~ E(c,t) = \int_0^1 c_s^\top M(c,t) c_s ds,
\end{equation*}
where $E$ is the Riemannian energy. 
If the manifold $\mathcal{M}$ is a complete metric space, such as $\mathbb{R}^n$, $n$-sphere $\mathbb{S}^n$, or any of their respective closed subsets, then a geodesic is guaranteed to exist by the Hopf-Rinow theorem \cite{carmo1992riemannian}.
In the sequel, the time argument in $M(c,t)$ and $E(c,t)$ is dropped for clarity.

\section{Adaptive Safety}
\label{sec:acbf}
\subsection{Background}
First consider the nominal dynamics of \cref{eq:unc_dyn}, i.e., $\Delta(x) = 0$.
Let a closed convex set $\mathcal{C} \subset \mathbb{R}^n$ be a 0-superlevel set of a continuously differentiable function $h: \mathbb{R}^n \rightarrow \mathbb{R}$ where
\begin{equation*}
    \begin{aligned}
    \mathcal{C} & = \left\{ x \in \mathbb{R}^n : h(x) \geq 0  \right\} \\
    \partial \mathcal{C} & = \left\{ x \in \mathbb{R}^n : h(x) = 0  \right\} \\
    \text{Int}\left(\mathcal{C}\right) & = \left\{ x \in \mathbb{R}^n : h(x) > 0  \right\}.
    \end{aligned}
\end{equation*}
If the nominal dynamics are locally Lipschitz, then given an initial condition $x_0$, there exists a maximum time interval $I(x_0) = [t_0,~T)$ such that $x(t)$ is a unique solution on $I(x_0)$. 
The following definitions are largely taken from \cite{ames2016control,ames2019control}.
\begin{definition}
\label{def:fi}
The set $\mathcal{C}$ is \emph{forward invariant} if for every $x_0\in \mathcal{C}$, $x(t) \in \mathcal{C}$ for all $t\in I(x_0)$.
\end{definition}

\begin{definition}
\label{def:safety}
The nominal system is \emph{safe} with respect to set $\mathcal{C}$ if the set $\mathcal{C}$ is forward invariant.
\end{definition}

\begin{definition}
A continuous function $\alpha : \mathbb{R} \rightarrow \mathbb{R}$ is an \emph{extended class $\mathcal{K}_\infty$ function} if it is strictly increasing, $\alpha(0) = 0$, and is defined on the entire real line.
\end{definition}

\begin{definition}
Let $\mathcal{C}$ be a 0-superlevel set for a continuously differentiable function $h:\mathbb{R}^n\rightarrow\mathbb{R}$, then $h$ is a \emph{control barrier function} if there exists an extended class $\mathcal{K}_\infty$ function $\alpha$ such that
\begin{equation}
\label{eq:cbf}
    \underset{u \in \mathcal{U}}{\text{sup}}~ \left[\frac{\partial h}{\partial x}(x)\left(f(x) + B(x) u\right)\right] \geq - \alpha(h(x)).
\end{equation}
\end{definition}

\begin{theorem}
Let $\mathcal{C}\subset\mathbb{R}^n$ be a 0-superlevel set of a continuously differentiable function $h:\mathbb{R}^n\rightarrow\mathbb{R}$, if $h$ is a CBF on $\mathcal{C}$ then any locally Lipschitz continuous controller satisfying \cref{eq:cbf} renders the set $\mathcal{C}$ safe for the nominal system. 
\end{theorem}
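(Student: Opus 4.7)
The plan is to show that any locally Lipschitz $u(x)$ satisfying \cref{eq:cbf} makes $h(x(t))$ nonnegative along every closed-loop trajectory starting in $\mathcal{C}$. First I would substitute the controller $u(x)$ into the nominal dynamics and observe that, since $f$, $B$, and $u$ are locally Lipschitz, the closed-loop vector field $f(x)+B(x)u(x)$ is locally Lipschitz as well, so Picard--Lindel\"of gives a unique solution $x(t)$ on the maximal interval $I(x_0)$. Evaluating \cref{eq:cbf} along this solution yields the scalar differential inequality
\begin{equation*}
    \dot{h}(x(t)) \;=\; \frac{\partial h}{\partial x}(x(t))\bigl(f(x(t)) + B(x(t)) u(x(t))\bigr) \;\geq\; -\alpha\bigl(h(x(t))\bigr),
\end{equation*}
which is the key inequality all further steps rest on.

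Next I would invoke the comparison lemma. Let $y(t)$ solve the scalar ODE $\dot{y} = -\alpha(y)$ with $y(0) = h(x_0) \geq 0$. Because $\alpha$ is an extended class $\mathcal{K}_\infty$ function with $\alpha(0)=0$, the value $y=0$ is an equilibrium; by uniqueness of solutions to this scalar ODE (which follows from local Lipschitzness of $\alpha$ as a continuous strictly increasing function, combined with the equilibrium structure at zero), $y(t)\geq 0$ for all $t\geq 0$ whenever $y(0)\geq 0$. The comparison lemma then gives $h(x(t)) \geq y(t) \geq 0$ for every $t \in I(x_0)$, which by \cref{def:fi} and \cref{def:safety} is exactly safety of the nominal system with respect to $\mathcal{C}$.

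The main subtlety, and the step I would be most careful about, is the behavior on the boundary $\partial\mathcal{C}$ where $h(x)=0$. There the CBF inequality only yields $\dot{h}\geq 0$, so naive pointwise reasoning leaves ambiguity about whether the trajectory can leave $\mathcal{C}$ tangentially; this is precisely what the comparison argument resolves, because $y\equiv 0$ is a valid lower-bounding solution and the inequality $h(x(t))\geq y(t)$ rules out crossings. If I wanted an alternative route I would use Nagumo's theorem directly, noting that $\dot{h}\geq 0$ whenever $h=0$ implies $f(x)+B(x)u(x)$ lies in the tangent cone of $\mathcal{C}$, so forward invariance follows. Either way, the extended class $\mathcal{K}_\infty$ structure of $\alpha$ is what makes the boundary case benign, and there are no further obstacles beyond standard ODE comparison.
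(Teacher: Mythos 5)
Your argument is essentially the standard proof of this result; note that the paper itself does not prove this theorem at all---it is quoted as background and attributed to the CBF literature (\cite{ames2016control,xu2015robustness}), so the relevant comparison is with the proof in those references, which is exactly the comparison-lemma route you take: local Lipschitzness of the closed-loop field gives existence and uniqueness on $I(x_0)$, the pointwise condition \cref{eq:cbf} gives $\dot{h}(x(t)) \geq -\alpha(h(x(t)))$, and a comparison with $\dot{y} = -\alpha(y)$, $y(0)=h(x_0)\geq 0$ yields $h(x(t))\geq 0$, i.e., forward invariance in the sense of \cref{def:fi,def:safety}. Your remark that the comparison argument is what neutralizes the boundary case is also the right emphasis, and it is in fact an advantage of this route over your proposed Nagumo alternative: applying Nagumo through $h$ requires additional regularity (e.g., $\frac{\partial h}{\partial x}\neq 0$ on $\partial\mathcal{C}$, or $\mathcal{C}$ equal to the closure of its interior) for $\dot{h}\geq 0$ on $\{h=0\}$ to translate into a tangent-cone condition, whereas the comparison argument needs no such assumption.

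One justification you give is incorrect as stated: a continuous, strictly increasing function with $\alpha(0)=0$ need \emph{not} be locally Lipschitz (e.g., $\alpha(r)=\mathrm{sign}(r)|r|^{1/3}$), so you cannot derive uniqueness of the comparison ODE that way. This does not sink the proof: either assume $\alpha$ locally Lipschitz (as the original CBF papers effectively do), or use the maximal-solution form of the comparison principle for merely continuous right-hand sides and observe that the maximal solution of $\dot{y}=-\alpha(y)$ starting at $y(0)\geq 0$ cannot cross zero, since $-\alpha(y)<0$ for $y>0$ and $-\alpha(y)>0$ for $y<0$ (the sign structure you already exploit). With that patch your proof is complete and matches the cited source's approach.
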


\subsection{Adaptive CBFs}
Adaptive CBFs (aCBFs) \cite{taylor2019adaptive} provide a general framework to guarantee safety through parameter adaptation for systems with structured uncertainties.
The notion of safety for uncertain systems must be extended to a family of safe sets $\mathcal{C}_{\theta}$ parameterized by $\theta$. 
More precisely, the family of safe sets are 0-superlevel sets of a continuously differentiable function $h_a: \mathbb{R}^n \times \mathbb{R}^p \rightarrow \mathbb{R}$.
If the uncertain dynamics in \cref{eq:unc_dyn} are locally Lipschitz then the definitions of forward invariance and safety can be directly extended to $\mathcal{C}_\theta$.

\begin{definition}[\cite{taylor2019adaptive}]
\label{def:acbf}
Let $\mathcal{C}_\theta$ be a family of 0-superlevel sets parameterized by $\theta$ for a continuously differentiable function $h_a:\mathbb{R}^n\times \mathbb{R}^p\rightarrow\mathbb{R}$, then $h_a$ is an \emph{adaptive control barrier function} if for all $\theta$
\begin{equation}
\label{eq:acbf}
    \underset{u \in \mathcal{U}}{\text{sup}}~ \left[ \frac{\partial h_a}{\partial x}(x,\theta)\left( f(x) - \Delta(x)^\top \Lambda(x,\theta) + B(x) u\right)\right] \geq 0,
\end{equation}
where $\Lambda(x,\theta) := \theta - \Gamma \left(\frac{\partial h_a}{\partial \theta}(x,\theta)\right)^\top$ and $\Gamma \in \mathcal{S}^p_+$ is a symmetric positive definite matrix
\end{definition}

A controller that satisfies \cref{eq:acbf} can be combined with an adaptation law to render the uncertain systems safe with respect to $\mathcal{C}_{{\theta}}$ \cite{taylor2019adaptive}.
However, \cref{eq:acbf} makes the level sets of $h_a$ forward invariant so it is a much stricter condition than \cref{eq:cbf}.
More precisely, the distance to the boundary of the safe set must monotonically increase, i.e., $\dot{h}_a(x,\theta) \geq 0$ for all time (\cref{fig:acbf}).
This can lead to extremely conservative behavior as the system only operates in a set that is monotonically shrinking.
In \cite{taylor2019adaptive}, a modified aCBF was proposed
\begin{equation}
\begin{aligned}
\label{eq:acbf_m}
    \bar{h}_a(x,\theta) = \begin{cases} 
    \sigma^2 ~ &\text{if} ~ h_a(x,\theta)\geq\sigma \\
    \sigma^2 - (h_a(x,\theta)-\sigma)^2 ~~ &\text{otherwise},
    \end{cases}
\end{aligned}
\end{equation}
which satisfies \cref{eq:acbf} if $h_a$ is a valid aCBF.
This modification expands the set of allowable states but the resulting controller is not necessarily Lipschitz and can exhibit high-frequency oscillations in closed-loop, as shown in \cref{ex:lipschitz}. 

\begin{example}
\label{ex:lipschitz}
Consider the uncertain system $\dot{x} = - \theta + u$ with $\theta>0$ and aCBF $h_a(x) = x - \ubar{x}$.
Let the controller $\kappa$ be the solution to $\kappa = {\argmin}~\frac{1}{2} u^2$ subject to $\dot{\bar{h}}_a(x) \geq 0$.
Then
\begin{equation}
\label{eq:u_ex}
\begin{aligned}
    \kappa = \begin{cases}
        0  ~ &\text{if} ~ h_a(x)\geq\sigma \\
        \mathrm{max}(0,\hat{\theta}) ~~ &\text{otherwise},
    \end{cases}
\end{aligned}
\end{equation}
where $\hat{\theta}$ is the estimate of $\theta$ and is modified based on \cite{taylor2019adaptive}
\begin{equation}
\label{eq:ha_ex}
\begin{aligned}
    \dot{\hat{\theta}} = \begin{cases}
        0  ~ &\text{if} ~ h_a(x)\geq\sigma \\
        - \Gamma \left[h_a(x)-\sigma\right] \frac{\partial h_a}{\partial x} ~~ &\text{otherwise},
    \end{cases}
\end{aligned}
\end{equation}
with $\Gamma > 0$. 
For $\hat{\theta}(0) = \hat{\theta}_0 \leq 0$ then $\kappa = 0$ so the closed-loop response is $x_{cl}(t) = - \theta (t-t_0) + x_0$ where $x_{cl}(t_0) = x_0 > \ubar{x}$ and $t \geq t_0$.
From the adaptation law \cref{eq:ha_ex}, it is easy to see that $\dot{\hat{\theta}} \geq 0 $ which will necessarily lead to $\hat{\theta} > 0$ since $h_a \leq \sigma$ until $\kappa > \theta$.
For $\hat{\theta} > 0$, the closed-loop response becomes
\begin{equation}
\label{eq:cl_ex}
    \begin{aligned}
        x_{cl}(t) = \begin{cases}
            -\theta (t-t_0) + x_0  ~ &\text{if} ~ h_a(x)\geq\sigma \\
            \hspace{0.28cm}\tilde{\theta} (t-t_0') + x_0' ~~ &\text{otherwise},
        \end{cases}
    \end{aligned}
\end{equation}
where $\tilde{\theta}:=\hat{\theta} - \theta$ and $x_{cl}(t_0') = x_0'$.
Since $\tilde{\theta} > 0$, \cref{eq:cl_ex} will continuously switch between its two solutions.
The control policy $\kappa$ must then also switch between $0$ and $\hat{\theta}$ based on \cref{eq:u_ex}.
Furthermore, $\kappa$ is not locally Lipschitz continuous and will exhibit high-frequency oscillations of magnitude $\hat{\theta}$ in closed-loop.
\end{example}

The intuition behind \cref{ex:lipschitz} is that chatter arises due to the barrier condition switching between being trivially satisfied, i.e., $\dot{\bar{h}}_a = 0 \geq 0$ for all $u$, to satisfied only for a particular $u$, i.e., a $u$ so that $\dot{\bar{h}}_a \geq 0$.
The approach developed in this work addresses the conservatism of aCBFs and results in a locally Lipschitz continuous controller.

\subsection{Robust aCBFs}
\label{sub:racbf}
This section will show that a \emph{tightened} set can be made forward invariant if the unknown model parameters are bounded and the parameter adaptation rate is an \emph{admissible} (to be defined) symmetric positive definite matrix.

\begin{assumption}
The unknown parameters $\theta$ belong to a known closed convex set $\Theta$.
The parameter estimation error $\tilde{\theta}:= \hat{\theta} - \theta$ then also belongs to a known closed convex set $\tilde{\Theta}$ and the maximum possible parameter error is $\tilde{\vartheta}$.
\end{assumption}

Let $\mathcal{C}^r_\theta$ be a family of superlevel sets parameterized by $\theta$ for a continuously differentiable function ${h}_r: \mathbb{R}^n \times \mathbb{R}^p \rightarrow \mathbb{R}$  
\begin{equation*}
    \begin{aligned}
    \mathcal{C}^r_\theta & = \left\{ x \in \mathbb{R}^n : {h}_r(x,\theta) \geq \frac{1}{2}\tilde{\vartheta}^\top \Gamma^{-1} \tilde{\vartheta}  \right\} \\
    \partial \mathcal{C}^r_\theta & = \left\{ x \in \mathbb{R}^n : {h}_r(x,\theta) = \frac{1}{2}\tilde{\vartheta}^\top \Gamma^{-1} \tilde{\vartheta} \right\} \\
    \text{Int}\left(\mathcal{C}^r_\theta\right) & = \left\{ x \in \mathbb{R}^n : {h}_r(x,\theta) > \frac{1}{2}\tilde{\vartheta}^\top \Gamma^{-1} \tilde{\vartheta} \right\},
    \end{aligned}
\end{equation*}
where $\Gamma \in \mathscr{S}^p_+ \subset \mathcal{S}^p_+$ is an admissible symmetric positive definite matrix that will dictate the parameter adaptation rate.
The set $\mathcal{C}^r_\theta$ can be viewed as a \emph{tightened} set with respect to $\mathcal{C}_\theta$, i.e., $\mathcal{C}^r_\theta \subset \mathcal{C}_\theta$, shown in \cref{fig:racbf}.
One can select the desired subset $\mathcal{C}^r_\theta$ to be made forward invariant \textit{a priori}  by choosing $h_r(x_r,\theta_r) >0 $ for appropriate $x_r,~\theta_r$ so ${h}_r(x_r,\theta_r) = \frac{1}{2}\tilde{\vartheta}^\top \Gamma^{-1} \tilde{\vartheta}$.
To reduce conservatism, one can either 1) have fast parameter adaptation or 2) reduce model error.
The first scenario can lead to well-known undesirable effects in practice so the second scenario is the most viable, and will be explored more in \cref{sec:smid}.
\cref{eq:unc_dyn} is again assumed to be locally Lipschitz so \cref{def:fi,def:safety} hold.

\begin{definition}
\label{definition:racbf}
Let $\mathcal{C}^r_\theta$ be a family of superlevel sets parameterized by $\theta$ for a continuously differentiable function $h_r:\mathbb{R}^n\times \mathbb{R}^p\rightarrow\mathbb{R}$, then $h_r$ is a \emph{robust adaptive control barrier function}  if there exists an extended class $\mathcal{K}_\infty$ function $\alpha$ such that for all $\theta\in\Theta$ 
\begin{equation}
\label{eq:racbf}
\begin{aligned}
    &\underset{u \in \mathcal{U}}{\text{sup}}~ \left[ \frac{\partial h_r}{\partial x}(x,\theta)\left[ f(x) - \Delta(x)^\top \Lambda(x,\theta) + B(x) u\right]\right] \\ 
    & \hspace{3cm} \geq -\alpha\left(h_r(x,\theta) - \frac{1}{2}\tilde{\vartheta}^\top \Gamma^{-1} \tilde{\vartheta}\right),
\end{aligned}
\end{equation}
where $\Lambda(x,\theta) := \theta - \Gamma \left(\frac{\partial h_r}{\partial \theta}(x,\theta)\right)^\top$, $\tilde{\vartheta}$ is the maximum possible parameter error, and $\Gamma \in \mathscr{S}^p_+ \subset \mathcal{S}^p_+$ is an admissible symmetric positive definite matrix.
\end{definition}

The invariance condition \cref{eq:racbf} is reminiscent of that in \cref{eq:cbf} and is less conservative than that in \cref{eq:acbf} because the system is allowed to approach the boundary of $\mathcal{C}^r_\theta$.
\cref{thm:racbf} shows the existence of a RaCBF, coupled with an adaptation law, renders the set $\mathcal{C}^r_\theta$ forward invariant and hence safe.

\begin{theorem}
\label{thm:racbf}
Let $\mathcal{C}^r_{\hat{\theta}}\subset\mathbb{R}^n$ be a superlevel set of a continuously differentiable function $h_r:\mathbb{R}^n \times \mathbb{R}^p\rightarrow\mathbb{R}$, if $h_r$ is a RaCBF on $\mathcal{C}^r_{\hat{\theta}}$, then any locally Lipschitz continuous controller satisfying \cref{eq:racbf} renders the set $\mathcal{C}^r_{\hat{\theta}}$ safe for the uncertainty system with adaptation law and adaptation gain
\begin{equation*}
\dot{\hat{\theta}} = \Gamma \Delta(x) \left(\frac{\partial h_r}{\partial x}(x,\hat{\theta}) \right)^\top, ~~ \lambda_{\min}(\Gamma) \geq \frac{\|\tilde{\vartheta}\|^2}{2h_r(x_r,{\theta}_r)},
\end{equation*}
where $\tilde{\vartheta}$ is the maximum possible parameter error, $h_r(x_r,\theta_r)>0$ can be chosen freely based on the desired conservatism, and $\Gamma \in \mathscr{S}^p_+ \subset \mathcal{S}^p_+$ is an admissible symmetric positive definite matrix.
Furthermore, the original set $\mathcal{C}_{\hat{\theta}}$ is also safe for the uncertain system.
\end{theorem}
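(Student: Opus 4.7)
The plan is to mimic the standard Lyapunov proof for adaptive CBFs: introduce a composite function that combines the barrier value with a quadratic parameter-error term, arrange for the indefinite-sign contribution of the unknown $\theta$ to cancel, and close the argument with a Nagumo-type comparison inequality. The natural candidate is the \emph{barrier-minus-error-penalty} function
\begin{equation*}
V(t) := h_r\!\left(x(t),\hat\theta(t)\right) - \tfrac{1}{2}\tilde\theta(t)^\top \Gamma^{-1}\tilde\theta(t).
\end{equation*}

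Differentiating $V$ along closed-loop trajectories, the crucial cancellation proceeds as follows. Since $\theta$ is constant, $\dot{\tilde\theta}=\dot{\hat\theta}$, and the prescribed adaptation law makes $\tilde\theta^\top\Gamma^{-1}\dot{\hat\theta}$ equal the same scalar $\tfrac{\partial h_r}{\partial x}\Delta(x)^\top\tilde\theta$ that appears in $\dot h_r$ after substituting $\theta = \hat\theta - \tilde\theta$ and absorbing $\tfrac{\partial h_r}{\partial\hat\theta}\dot{\hat\theta}$ into the definition of $\Lambda(x,\hat\theta)$. The two terms annihilate, leaving $\dot V$ equal to the left-hand side of \cref{eq:racbf} evaluated at $\hat\theta$; since \cref{eq:racbf} must hold for every $\theta\in\Theta$ (in particular $\hat\theta$), any controller satisfying it yields $\dot V \geq -\alpha\!\left(h_r - \tfrac{1}{2}\tilde\vartheta^\top\Gamma^{-1}\tilde\vartheta\right)$.

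Next, the worst-case bound $\tilde\theta^\top\Gamma^{-1}\tilde\theta \leq \tilde\vartheta^\top\Gamma^{-1}\tilde\vartheta$ gives $h_r - \tfrac{1}{2}\tilde\vartheta^\top\Gamma^{-1}\tilde\vartheta \leq V$, and monotonicity of $\alpha$ converts the above into the differential inequality $\dot V \geq -\alpha(V)$. A standard comparison/Nagumo argument then yields $V(t)\geq 0$ for all $t\in I(x_0)$ provided $V(0)\geq 0$. The hypothesis $x_0\in\mathcal{C}^r_{\hat\theta_0}$ combined with the same worst-case bound produces $V(0)\geq 0$, so forward invariance of the zero-superlevel set of $V$ delivers safety with respect to $\mathcal{C}^r_{\hat\theta}$; safety of the enclosing set $\mathcal{C}_{\hat\theta}$ is immediate because $\mathcal{C}^r_{\hat\theta}\subset\mathcal{C}_{\hat\theta}$. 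The admissibility condition $\lambda_{\min}(\Gamma)\geq \|\tilde\vartheta\|^2/(2 h_r(x_r,\theta_r))$ plays only a setup role: it ensures the tightening offset $\tfrac{1}{2}\tilde\vartheta^\top\Gamma^{-1}\tilde\vartheta$ does not exceed the user-chosen $h_r(x_r,\theta_r)$, so the tightened set is nonempty.

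I expect the main obstacle to be the chain-rule bookkeeping in the second paragraph --- in particular, matching signs so that $\Lambda(x,\hat\theta)$ appears exactly, and justifying that the RaCBF condition may be invoked at $\hat\theta$ rather than at the unknown $\theta$. Once the differential inequality $\dot V\geq -\alpha(V)$ is in hand, the comparison step is classical and requires no additional work.
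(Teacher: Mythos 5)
Your proposal matches the paper's proof essentially step for step: the same composite function $h_r(x,\hat\theta)-\tfrac{1}{2}\tilde\theta^\top\Gamma^{-1}\tilde\theta$, the same cancellation of the $\tilde\theta$ cross-term via the adaptation law and the definition of $\Lambda$, the same bound $\tilde\theta^\top\Gamma^{-1}\tilde\theta\leq\tilde\vartheta^\top\Gamma^{-1}\tilde\vartheta$ to obtain $\dot V\geq-\alpha(V)$, and the same comparison argument to conclude $V\geq 0$ and hence invariance. No meaningful differences to report.
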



%

\begin{remark}
\label{remark:proj}
The projection operator \cite{slotine1991applied} can be used to enforce parameter bounds by modifying the above adaptation law as opposed to capturing them explicitly with $h_r$. 
This can simplify the design of $h_r$ without forfeiting safety.
The proof is omitted but one can show that a positive semi-definite term appears in the same composite candidate CBF used in \cref{thm:racbf} when adaptation is temporarily stopped.
\end{remark}

\begin{figure}[t!]
\vskip 0.1in
    \begin{subfigure}{.48\columnwidth}
         \centering
         \includegraphics[trim=100 0 100 0, clip,width=1\linewidth]{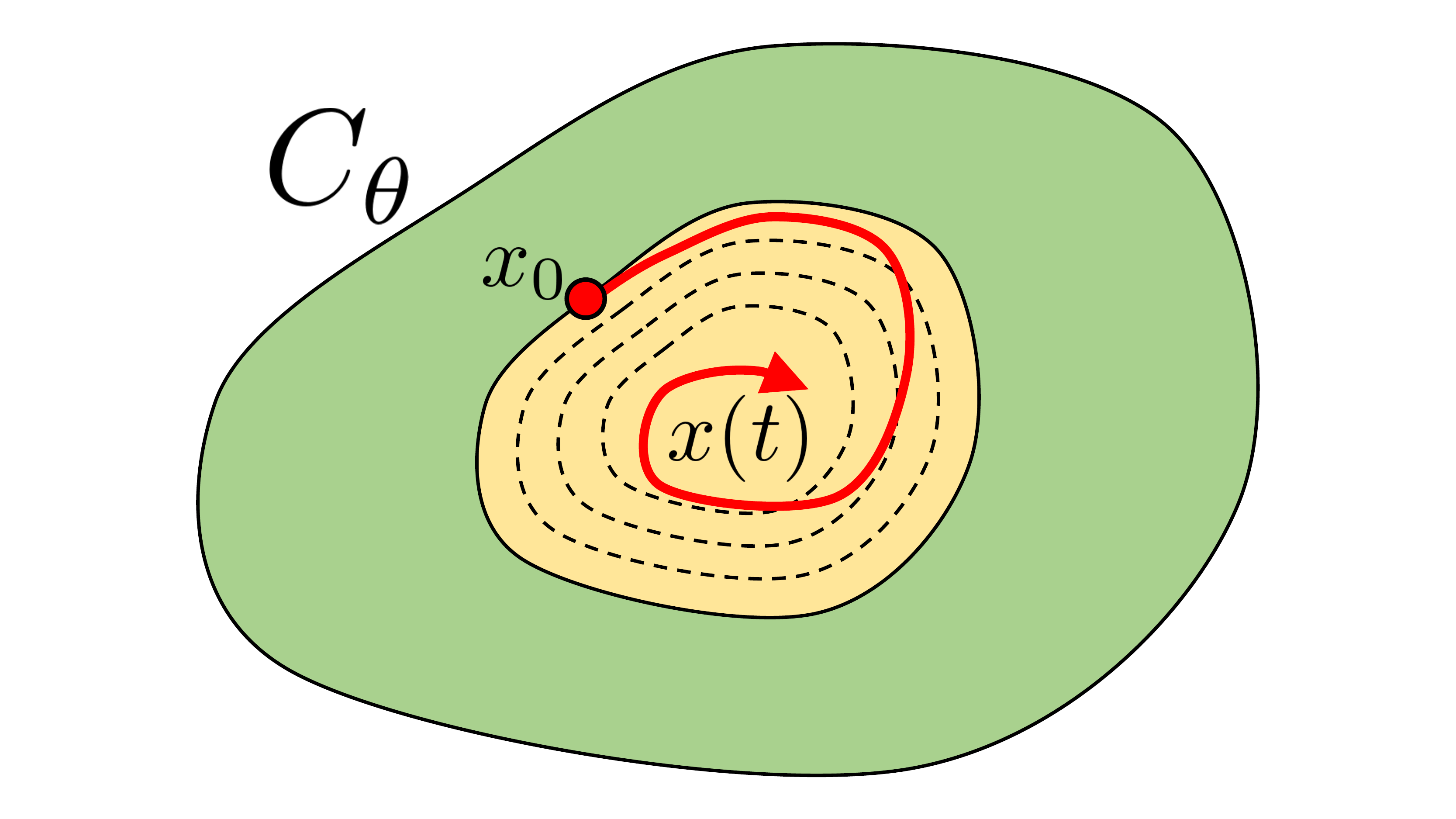}
         \caption{{Safe set with adaptive control barrier functions (aCBFs).}}
         \label{fig:acbf}
     \end{subfigure}
    \hspace{0.3em}
     \begin{subfigure}{.48\columnwidth}
         \centering
         \includegraphics[trim=100 0 100 0, clip,width=1\linewidth]{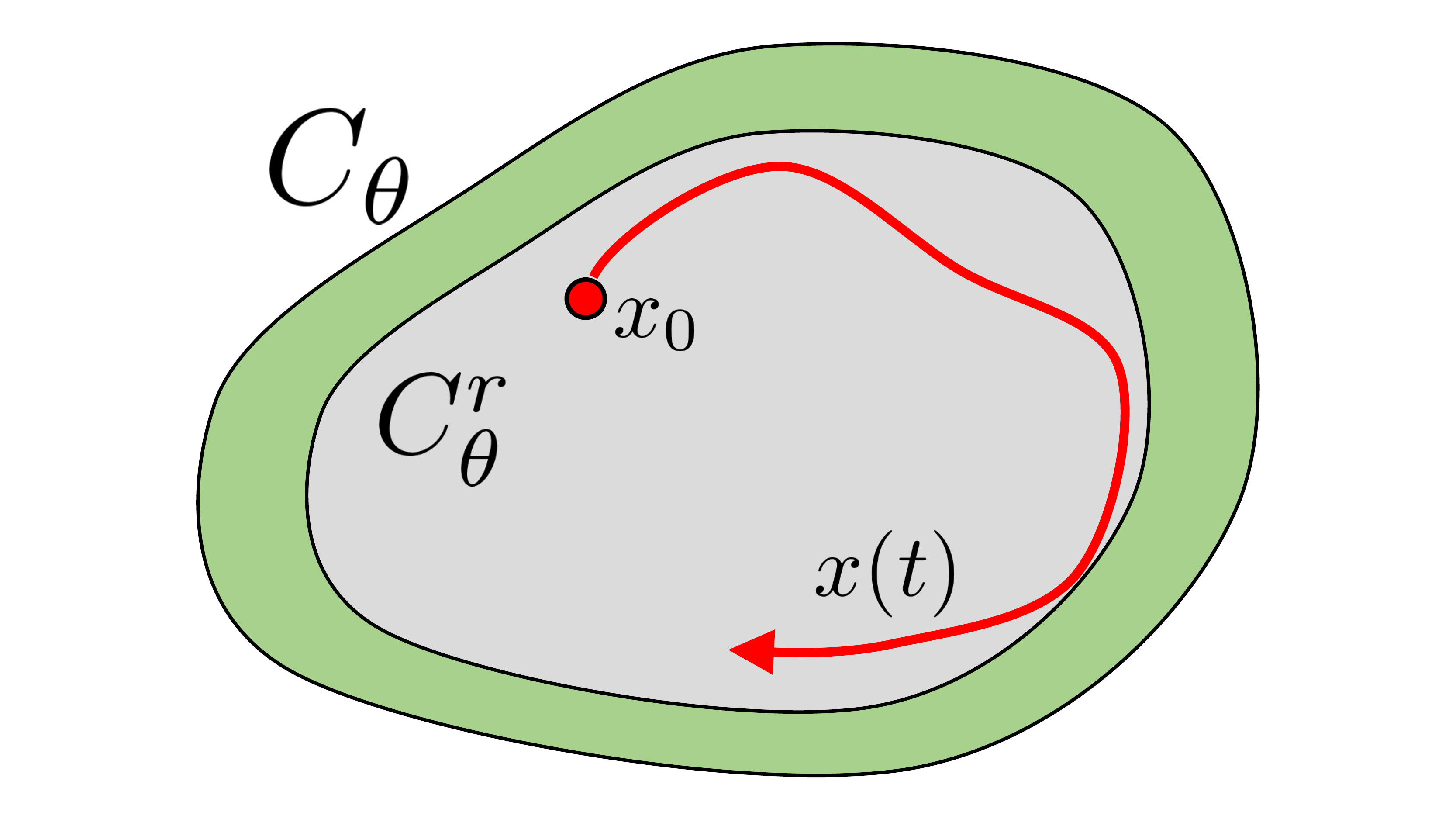}
         \caption{{Safe set with robust adaptive control barrier functions (RaCBFs).}}
         \label{fig:racbf}
     \end{subfigure}
    \caption{Visual comparison of safe sets with adaptive and robust adaptive control barrier functions. (a): System is restricted to level sets (black dashed lines) of aCBF $h_a$. (b): System allowed to operate in larger set with RaCBF $h_r$ reducing conservatism.}
    \vskip -0.25in
\end{figure}

Several remarks can be made about \cref{thm:racbf}.
First, safety is \emph{guaranteed} for all possible parameter realizations through adaptation with minimal conservatism.
Hence, RaCBFs expand and improve the \emph{adaptive safety} paradigm.
Second, the minimum eigenvalue condition for the adaptation rate depends on the desired conservatism, i.e., the degree of tightening by choice of $h_r(x_r,\theta_r)$.
For low conservatism, i.e., a small $h_r(x_r,\theta_r)$ value, the adaptation rate must be large so the parameter estimates can change quickly to ensure forward invariance of $\mathcal{C}^r_\theta$.
There is thus a fundamental trade-off between conservatism and parameter adaptation rate that must be weighed carefully given the well-known undesirable effects of high-gain adaptation.
Third, the RaCBF condition in \cref{eq:racbf} can be used as a safety filter for an existing tracking controller or as a constraint within an optimization.
\cref{sec:result} will show the latter but with a contraction-based controller.
Lastly, if the adaptation gain must be small (or the maximum parameter error is large) then RaCBFs can be conservative albeit not to the same extent as aCBFs.
Better performance can be obtained if the model parameters can be robustly and accurately estimated.
Instead of obtaining a point-estimate of the parameters, this work will instead identify the \emph{set} of possible parameter values.

\subsection{RaCBFs with Set Membership Identification}
\label{sec:smid}
Set membership identification (SMID) is a model estimation technique that constructs an unfalsified set of model parameters.
SMID was originally developed to identify transfer functions for uncertain linear systems \cite{pararrieter1992set}, but has been more recently applied to linear \cite{tanaskovic2014adaptive,lorenzen2017adaptive} and nonlinear adaptive MPC \cite{lopez2019adaptive}.
Assume that the true parameters $\theta^*$ belong to an initial set of possible parameters $\Theta^0$, i.e., $\theta^* \in \Theta^0$.
Given $k$ state, input, and rate measurements (denoted as $x_{1:k}$ and so forth), a set $\Xi$ can be constructed such that
\begin{equation*}
\Xi = \left\{ \varrho :  |\dot{x}_{1:k} - f_{1:k} + \Delta_{1:k}^\top \varrho - B_{1:k}u_{1:k}| \leq D  \right\},
\end{equation*}
where $D$ can be treated as a tuning parameter that dictates the conservativeness of SMID. 
It can also represent a disturbance or noise bound \cite{tanaskovic2014adaptive,lopez2019adaptive}.
The set of possible parameter values can then be updated via $\Theta^{j+1} = \Theta^j \cap \Xi$ for all $j\geq 0$.
In practice, $\Xi$ can be found by solving a linear program and set intersection can be efficiently done through a combination of min and max operations.
Restricting $\hat{\theta} \in \Theta^j$ then $\tilde{\theta} \in \Tilde{\Theta}^j$ where $\Tilde{\Theta}^j$ is the set of possible parameter errors.
The following lemma shows the advantage of performing set identification over point-estimation techniques.
\begin{lemma}
\label{lemma:smid}
Model uncertainty monotonically decreases with set membership identification, i.e., $\tilde{\Theta}^{j+1} \subseteq \tilde{\Theta}^j$ for all $j\geq 0$.
\end{lemma}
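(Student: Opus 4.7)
The plan is to observe that the SMID update is a pure set-intersection operation, and to then lift the resulting monotonicity of the parameter set $\Theta^j$ to the error set $\tilde{\Theta}^j$. Essentially nothing beyond basic set inclusion is needed.

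First, I would note that by construction the update rule $\Theta^{j+1} = \Theta^j \cap \Xi$ immediately yields $\Theta^{j+1} \subseteq \Theta^j$ for every $j \geq 0$, regardless of what $\Xi$ looks like or how many measurements are used. This nesting is the workhorse of the proof.

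Next, I would translate this inclusion to the error set. Writing $\tilde{\theta} = \hat{\theta} - \theta^*$ with $\theta^*$ the true parameter, I would argue that $\theta^* \in \Theta^j$ for all $j$: this holds for $j=0$ by assumption, and at each step $\theta^*$ satisfies the SMID inequality defining $\Xi$ (this is the point of choosing $D$ to cover measurement noise/disturbances), so the intersection preserves $\theta^*$. Restricting $\hat\theta \in \Theta^j$ then gives $\tilde{\Theta}^j = \Theta^j - \theta^*$, and the inclusion $\Theta^{j+1} \subseteq \Theta^j$ yields
\begin{equation*}
\tilde{\Theta}^{j+1} \;=\; \Theta^{j+1} - \theta^* \;\subseteq\; \Theta^{j} - \theta^* \;=\; \tilde{\Theta}^{j},
\end{equation*}
which is exactly the claim.

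There is no genuine obstacle here; the result is essentially a tautology once the SMID update is viewed as a set intersection. The only subtlety worth flagging explicitly is the implicit requirement that $D$ be chosen consistently with the data-generating process so that $\theta^* \in \Xi$ at every step; this is standard in the SMID literature and is already assumed in the references cited in the paragraph preceding the lemma. I would therefore keep the write-up to just a few lines: state the nesting of $\Theta^j$ from the intersection update, note that $\theta^*$ stays in every $\Theta^j$, and conclude with the displayed inclusion above.
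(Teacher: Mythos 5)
Your proposal is correct and follows essentially the same route as the paper's own proof: the intersection update gives $\Theta^{j+1}\subseteq\Theta^j$, the true parameter $\theta^*$ remains unfalsified in every $\Theta^j$, and the nesting of the parameter sets transfers directly to the error sets $\tilde{\Theta}^j$. Your write-up is a slightly more explicit version of the paper's one-line argument (including the remark that $D$ must cover the noise so that $\theta^*\in\Xi$), but there is no substantive difference in approach.
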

\begin{proof}
Since $\theta^* \in \Theta^{j+1}$ then $\theta^* \in \Theta^{j} \cap \Xi$ which is true if $\theta^* \in \Theta^j$ so $\Theta^{j+1}\subseteq\Theta^j$ and $\tilde{\Theta}^{j+1} \subseteq \tilde{\Theta}^j$.
\end{proof}

The motivation to combine SMID with RaCBFs is to enlarge the tightened set $\mathcal{C}_\theta^r$.
To do so, one must ensure $\mathcal{C}^r_\theta$ remains forward invariant as the set of model parameters is updated.
In general this is non-trivial to prove since the maximum possible parameter error is now time varying.
However, \cref{thm:racbf_smid} shows that safety is maintained if the model uncertainty monotonically decreases.

\begin{theorem}
\label{thm:racbf_smid}
Let $\mathcal{C}^r_\theta$ be a superlevel set of a continuously differentiable function ${h}_r : \mathbb{R}^n\times \mathbb{R}^p \rightarrow \mathbb{R}$. If the system is safe on $\mathcal{C}^r_\theta$ then it remains safe if the maximum allowable model parameter error $\tilde{\vartheta}$ monotonically decreases. 
Moreover, the tightened set $\mathcal{C}^r_\theta$ converges to $\mathcal{C}_\theta$ monotonically.
\end{theorem}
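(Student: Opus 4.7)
The plan is to reduce \cref{thm:racbf_smid} to a repeated application of \cref{thm:racbf} by exploiting the monotonicity supplied by \cref{lemma:smid}. Denote by $\tilde{\vartheta}^j$ the maximum parameter error after the $j$-th SMID update and by $\mathcal{C}^{r,j}_{\hat\theta}$ the corresponding tightened set. First I would observe that since $\tilde{\Theta}^{j+1}\subseteq\tilde{\Theta}^j$, the norm $\|\tilde{\vartheta}^j\|$ is non-increasing in $j$, so the defining level $\tfrac{1}{2}(\tilde{\vartheta}^j)^\top \Gamma^{-1} \tilde{\vartheta}^j$ is non-increasing as well. Consequently the tightened sets are nested, $\mathcal{C}^{r,j}_{\hat\theta}\subseteq\mathcal{C}^{r,j+1}_{\hat\theta}\subseteq\mathcal{C}_{\hat\theta}$, and in particular $x(t_j)\in\mathcal{C}^{r,j}_{\hat\theta}$ immediately gives $x(t_j)\in\mathcal{C}^{r,j+1}_{\hat\theta}$ at every update instant.

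Second, on each inter-update interval $[t_j,t_{j+1})$ the maximum admissible error is the constant $\tilde{\vartheta}^j$, so \cref{thm:racbf} applies verbatim provided the admissible-gain inequality $\lambda_{\min}(\Gamma)\geq\|\tilde{\vartheta}^j\|^2/(2h_r(x_r,\theta_r))$ and the RaCBF inequality \cref{eq:racbf} both hold after the update. Both are in fact \emph{relaxed} by the update: the former because $\|\tilde{\vartheta}^j\|$ has just decreased, and the latter because its right-hand side $-\alpha(h_r-\tfrac{1}{2}(\tilde{\vartheta}^j)^\top \Gamma^{-1} \tilde{\vartheta}^j)$ becomes more negative while its left-hand side is independent of $\tilde{\vartheta}^j$. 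Hence any locally Lipschitz controller satisfying the pre-update RaCBF condition continues to satisfy the post-update one, and \cref{thm:racbf} yields forward invariance of $\mathcal{C}^{r,j+1}_{\hat\theta}$ on $[t_j,t_{j+1})$. An induction on $j$ then establishes safety for all $t\geq t_0$.

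For the convergence claim, the nested chain $\{\mathcal{C}^{r,j}_{\hat\theta}\}$ has a monotone limit $\bigcup_{j}\mathcal{C}^{r,j}_{\hat\theta}\subseteq\mathcal{C}_{\hat\theta}$, with equality precisely when $\|\tilde{\vartheta}^j\|\to 0$, since then the threshold vanishes. I expect the main subtlety to be the update instant itself: the defining threshold drops discontinuously while $x$ and $\hat\theta$ evolve continuously, so one must rule out any ``jump'' that could violate the barrier condition. The two sanity checks in the previous paragraph are exactly what makes this rigorous; beyond that, the argument is essentially bookkeeping and does not require re-deriving the composite Lyapunov-like function from \cref{thm:racbf}.
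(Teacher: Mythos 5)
Your proof is correct, but it follows a genuinely different route from the paper. The paper keeps everything in continuous time: it lets the bound $\tilde{\vartheta}(t)$ vary continuously, notes from \cref{lemma:smid} that $\dot{\tilde{\vartheta}}\leq 0$, and differentiates the composite barrier so that the time-varying threshold contributes an extra term $-\tilde{\vartheta}^\top\Gamma^{-1}\dot{\tilde{\vartheta}}\geq 0$; hence $\dot{h}\geq\dot{h}_r$, the inequality \cref{eq:racbf} is still met, and the argument of \cref{thm:racbf} is rerun verbatim, with convergence of $\mathcal{C}^r_{\hat{\theta}}$ to $\mathcal{C}_{\hat{\theta}}$ read off from the monotone decrease of $\tilde{\vartheta}$. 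You instead discretize: SMID updates occur at instants $t_j$, the tightened sets are nested ($\mathcal{C}^{r,j}_{\hat\theta}\subseteq\mathcal{C}^{r,j+1}_{\hat\theta}$), each update only relaxes both the admissible-gain inequality and the right-hand side of \cref{eq:racbf}, and \cref{thm:racbf} is applied as a black box on each inter-update interval with an induction over $j$. Your version buys fidelity to how SMID is actually implemented (batch updates at discrete times) and avoids differentiating $\tilde{\vartheta}(t)$, which need not be smooth; it also explicitly handles the update instant via set nesting, and your convergence statement (the limit is $\bigcup_j\mathcal{C}^{r,j}_{\hat\theta}$, equal to $\mathcal{C}_{\hat\theta}$ exactly when $\|\tilde{\vartheta}^j\|\to 0$) is somewhat more careful than the paper's claim that monotonicity alone gives $\tilde{\vartheta}\to 0$. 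The paper's version buys a one-line argument covering arbitrary (not necessarily piecewise-constant) non-increasing bounds, showing the decrease actively strengthens the barrier rather than merely not harming it. One shared caveat, present in both arguments: concluding that the threshold $\frac{1}{2}\tilde{\vartheta}^\top\Gamma^{-1}\tilde{\vartheta}$ shrinks when $\tilde{\vartheta}$ shrinks componentwise (or that $-\tilde{\vartheta}^\top\Gamma^{-1}\dot{\tilde{\vartheta}}\geq 0$) implicitly uses, e.g., a diagonal $\Gamma$; neither you nor the paper states this, and it is worth a remark, but it is not a gap specific to your approach.
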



Combining RaCBFs and SMID provides a mechanism to 1) modify parameters via adaptation to achieve safety and 2) update the model to reduce uncertainty and conservatism.
Safety is \emph{guaranteed} even as the model parameters are modified online, and the system's performance will only improve as more data is collected.
This \emph{adaptive data-driven safety} paradigm can be merged with a stabilizing adaptive controller for safe reference tracking.
To maximize the generality of the proposed unification, the adaptive controller must be applicable to a broad class of nonlinear systems.

\section{Adaptive Control with Contraction Metrics}
\label{sec:accm}
Several adaptive control techniques have been proposed for nonlinear systems, including methods based on feedback linearization, variable structure, and backstepping (see \cite{slotine1991applied,miroslav1995nonlinear}).
These methods are limited to certain classes of systems because they rely on \emph{explicitly} constructing a control Lyapunov function (CLF) to prove stability.
This work will instead utilize a \emph{differential} approach based on contraction analysis that can be applied to a broad class of systems.

\begin{figure}[t!]
\vskip 0.05in
    \begin{subfigure}{.47\columnwidth}
         \centering
         \includegraphics[width=1\linewidth]{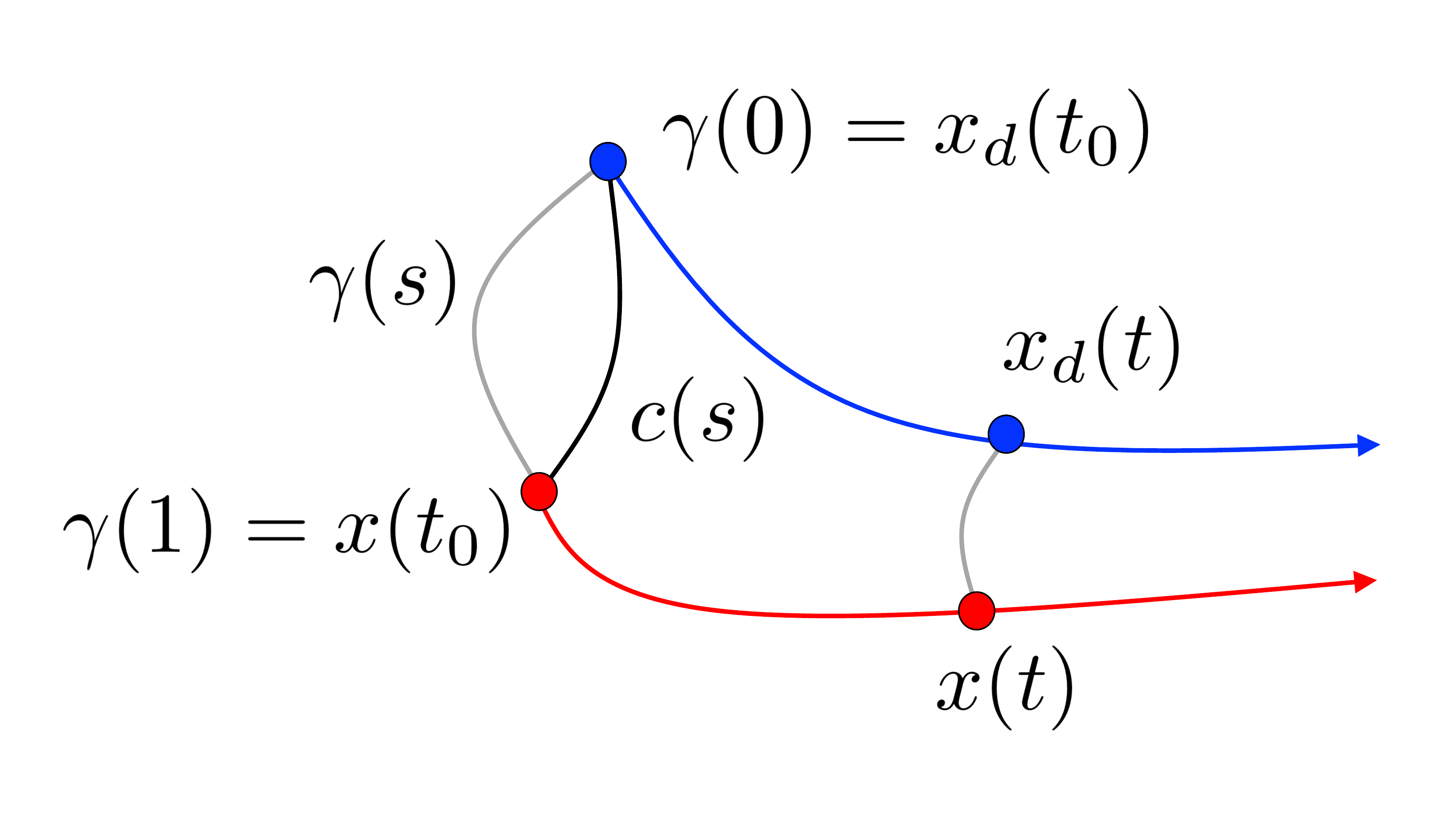}
         \caption{{Geodesic and arbitrary curve connecting current and desired state.}}
         \label{fig:geodesig}
     \end{subfigure}
    \hspace{0.6em}
     \begin{subfigure}{.47\columnwidth}
         \centering
         \includegraphics[width=1\linewidth]{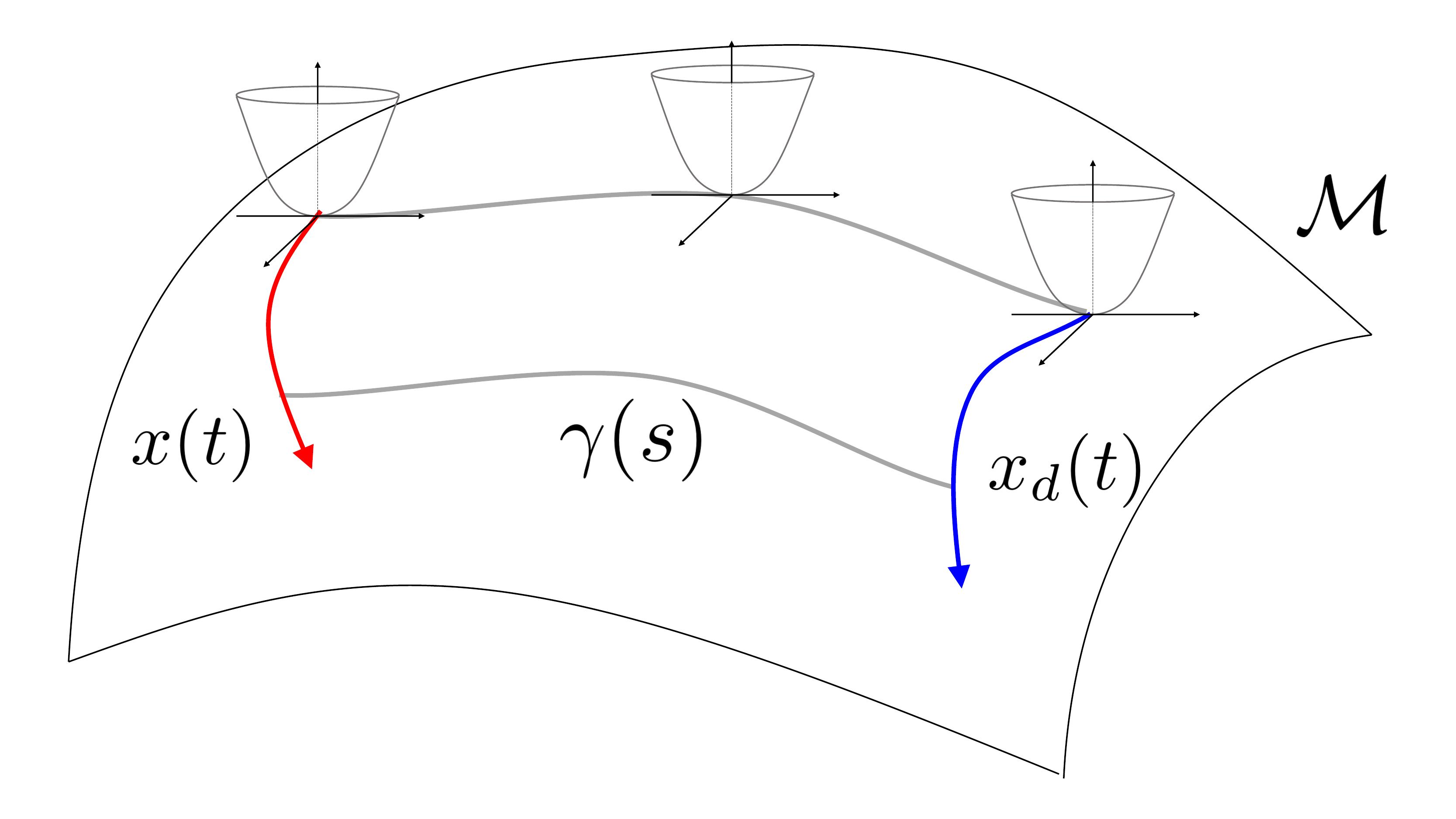}
         \caption{{Differential CLFs along geodesic connecting current and desired state.}}
         \label{fig:manifold}
     \end{subfigure}
    \caption{Geodesic and differential CLF visualization. (a): Geodesic (grey) connecting current $x$ (red) and desired $x_d$ (blue) state. (b): Differential CLFs are integrated along geodesic to achieve exponential convergence.}
    \label{fig:dclf}
    \vskip -0.2in
\end{figure}

\subsection{Contraction Metrics}
The nominal differential dynamics of \cref{eq:unc_dyn} are $\dot{\delta}_x  = A(x,u)\delta_x + B(x)\delta_u$ where $A(x,u) = \frac{\partial f}{\partial x} + \sum_{i=1}^m \frac{\partial b_i}{\partial x}u_i$.
Contraction analysis searches for a \emph{control contraction metric} (CCM) $M(x)$ such that a \emph{differential} CLF $\delta V = \delta_x^\top M(x) \delta_x$ satisfies $\delta \dot{V}\leq - 2 \lambda \delta V$ for all $x$.
A global CLF can be obtained by integrating along a geodesic $\gamma(s)$, illustrated in \cref{fig:dclf}, with $\gamma(0) = x_d$ and $\gamma(1) = x$ where $x_d$ and $x$ are the desired and current state.
The Riemannian energy and tracking error both converge to zero exponentially, i.e., $\dot{E} \leq - 2 \lambda E$.
Let $W(x) = M(x)^{-1}$, $M(x)$ is CCM if \cite{manchester2017control}
\begin{gather}
     B^\top_{\perp}\left( W A ^\top + A W - \dot{W} + 2 \lambda W \right)B_{\perp} \preceq 0 \tag{C1} \label{eq:ccm_c1} \\
    \partial_{b_i}W - W \frac{\partial b_i}{\partial x}^\top  -  \frac{\partial b_i}{\partial x} W = 0 ~~ i=1,\dots,m \tag{C2} \label{eq:ccm_c2}
\end{gather}
where $B_\perp(x)$ is the annihilator matrix of $B(x)$, i.e., $B^\top_\perp B = 0$.
\cref{eq:ccm_c1} ensures the dynamics orthogonal to $u$ are contracting and is a stabilizability condition.
\cref{eq:ccm_c2} requires the column vectors of $B(x)$ form a Killing vector for the dual metric $W(x)$ leading to simpler controllers \cite{manchester2017control}. 
\cref{eq:ccm_c1} and \cref{eq:ccm_c2} will be referred to as the \emph{strong CCM conditions}.

\subsection{Adaptive Control \& Contraction}
A novel adaptive control method was developed in \cite{lopez2019contraction} for closed-loop contracting systems with \emph{extended matched uncertainties}, i.e., $\Delta(x)^\top \theta \in \text{span}\{B,ad_fB\}$ were $ad_fB$ is the Lie bracket of the vector fields $f(x)$ and $B(x)$.
To stabilize such systems, the \emph{parameter-dependent} metric $M(x,\hat{\theta})$ was introduced and must satisfy the strong CCM conditions \emph{for all} possible $\hat{\theta}$.
This led to the following result.

\begin{theorem}[\cite{lopez2019contraction}]
\label{thm:exMatched}
If a parameter-dependent metric can be computed for \cref{eq:unc_dyn} with extended matched uncertainties, then the closed-loop systems is asymptotically stable with
\begin{equation}
    \dot{\hat{\theta}} = - \Gamma \Delta(x) M(x,\hat{\theta})\gamma_s(1)
    \label{eq:accm}
\end{equation}
where $\gamma_s(s):=\frac{\partial \gamma}{\partial s}$ is the geodesic speed and $\Gamma \in \mathcal{S}^p_+$ is a symmetric positive definite matrix.
\end{theorem}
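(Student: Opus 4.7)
The plan is to use a composite Lyapunov-like function that combines the Riemannian energy along the geodesic with a quadratic penalty on the parameter estimation error, and to show the adaptation law \cref{eq:accm} is exactly the gradient flow that cancels the cross term produced by the uncertainty.

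First I would set $\tilde{\theta}=\hat{\theta}-\theta$ and define
\begin{equation*}
    V(x,x_d,\hat{\theta}) \;=\; E(\gamma,\hat{\theta}) \;+\; \tfrac{1}{2}\tilde{\theta}^{\top}\Gamma^{-1}\tilde{\theta},
\end{equation*}
where $\gamma$ is the minimizing geodesic between $x_d$ and $x$ under the parameter-dependent metric $M(\cdot,\hat{\theta})$. Since $\theta$ is constant, $\dot{\tilde{\theta}}=\dot{\hat{\theta}}$, so the parameter term contributes $\tilde{\theta}^{\top}\Gamma^{-1}\dot{\hat{\theta}}$ to $\dot{V}$.

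Next I would differentiate $E$ along the closed-loop trajectory. By the first variation of Riemannian energy, the interior integral collapses under the strong CCM conditions \cref{eq:ccm_c1}, \cref{eq:ccm_c2} imposed uniformly in $\hat{\theta}$: condition \cref{eq:ccm_c1} gives the contracting bound $-2\lambda E$ on the component orthogonal to $B$, and condition \cref{eq:ccm_c2} (the Killing property for the dual metric) ensures that variations of $M$ in the control direction do not produce uncompensated terms. The only non-trivial contributions are the two boundary terms $2\gamma_s(1)^{\top}M(x,\hat{\theta})\dot{x}$ and $-2\gamma_s(0)^{\top}M(x_d,\hat{\theta})\dot{x}_d$, together with an internal term involving $\partial_{\hat{\theta}} M\,\dot{\hat{\theta}}$ that, because of how the CCM framework handles endpoint conditions on the minimizing geodesic, can be folded into the boundary expression. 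Substituting the uncertain dynamics \cref{eq:unc_dyn} into the boundary term at $s=1$ isolates the uncertainty contribution $-2\gamma_s(1)^{\top}M(x,\hat{\theta})\Delta(x)^{\top}\theta$. Writing $\theta = \hat{\theta} - \tilde{\theta}$ and using the contracting controller constructed under the extended matched assumption $\Delta^{\top}\theta \in \mathrm{span}\{B,\,\mathrm{ad}_f B\}$ (which is precisely what lets $\hat{\theta}$ be absorbed into a feasible control), what survives is a single cross term $+2\tilde{\theta}^{\top}\Delta(x)M(x,\hat{\theta})\gamma_s(1)$ added to the contracting bound $-2\lambda E$.

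Finally, substituting \cref{eq:accm} into $\tilde{\theta}^{\top}\Gamma^{-1}\dot{\hat{\theta}}$ yields exactly $-2\tilde{\theta}^{\top}\Delta(x)M(x,\hat{\theta})\gamma_s(1)$, cancelling the uncertainty cross term and giving
\begin{equation*}
    \dot{V} \;\leq\; -2\lambda E \;\leq\; 0.
\end{equation*}
Boundedness of $V$ implies $\tilde{\theta}\in\mathcal{L}_{\infty}$ and $E\in\mathcal{L}_{\infty}$. Integrating yields $E\in\mathcal{L}_{1}$, and uniform continuity of $\dot{E}$ (which follows from locally Lipschitz dynamics and boundedness of the signals) lets me invoke Barbalat's lemma to conclude $E\to 0$, hence asymptotic convergence of $x$ to $x_d$.

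The main obstacle is making the middle step rigorous: in the parameter-dependent case the inner product $\langle\cdot,\cdot\rangle_{x}$ itself varies with $\hat{\theta}$, so the standard CCM boundary-term derivation acquires an extra $\partial_{\hat{\theta}}M\,\dot{\hat{\theta}}$ contribution along the geodesic. Showing that the strong CCM conditions holding for \emph{all} $\hat{\theta}$ (rather than at a fixed value) together with the extended matched structure neutralize this extra term and leave a single clean cross term that \cref{eq:accm} can cancel, is the technical heart of the argument.
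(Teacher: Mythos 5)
Note first that the paper you are annotating does not prove this theorem at all: it is imported verbatim from \cite{lopez2019contraction}, so the only meaningful comparison is with the argument in that reference. Your overall strategy --- the composite function combining the Riemannian energy $E(\gamma,\hat{\theta})$ with a quadratic penalty on $\tilde{\theta}$, cancellation of the uncertainty cross term at the $s=1$ boundary by the adaptation law \cref{eq:accm}, and Barbalat's lemma to conclude convergence --- is indeed the route taken there. The genuine gap sits exactly where you flag it, and your proposed fix is not correct as stated. Because the metric is parameter dependent, $\dot{V}$ contains $\sum_i \frac{\partial E}{\partial \hat{\theta}_i}\dot{\hat{\theta}}_i$; since the geodesic minimizes $E$ over curves with fixed endpoints, only the explicit dependence of the integrand on $\hat{\theta}$ survives, i.e. $\int_0^1 \gamma_s^\top\, \partial_{\hat{\theta}_i} M(\gamma,\hat{\theta})\, \gamma_s\, ds\,\dot{\hat{\theta}}_i$. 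This is an \emph{interior} integral, not a boundary term, so it cannot be ``folded into the boundary expression,'' and neither \cref{eq:ccm_c1} nor the Killing condition \cref{eq:ccm_c2} removes it: those conditions constrain variations of $W$ along $b_i$ and along the flow, not along $\hat{\theta}$. In \cite{lopez2019contraction} this is precisely where the extended matched structure $\Delta(x)^\top\theta \in \mathrm{span}\{B, \mathrm{ad}_f B\}$, the requirement that the strong CCM conditions hold for \emph{all} $\hat{\theta}$, and the companion result that the metric is $\hat{\theta}$-independent in the purely matched case are actually used; your sketch asserts, but does not establish, that these hypotheses neutralize the extra term, so the technical heart of the proof is still missing.

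There is also a bookkeeping error in the cancellation step. The first variation of the energy carries a factor of two, so the surviving cross term is $2\,\tilde{\theta}^\top\Delta(x) M(x,\hat{\theta})\gamma_s(1)$, while your choice $V = E + \tfrac{1}{2}\tilde{\theta}^\top\Gamma^{-1}\tilde{\theta}$ together with \cref{eq:accm} contributes only $-\tilde{\theta}^\top\Delta(x) M(x,\hat{\theta})\gamma_s(1)$; the claimed exact cancellation fails by a factor of two. You need $V = E + \tilde{\theta}^\top\Gamma^{-1}\tilde{\theta}$ (equivalently, adaptation gain $2\Gamma$) for the algebra to close. The concluding Barbalat argument is fine once these two issues are repaired.
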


\begin{remark}
For \emph{matched} uncertainties, the metric is \emph{independent} of the unknown parameters $\hat{\theta}$ \cite[Lemma 1]{lopez2019contraction} simplifying its computation.
Otherwise, sum-of-square or robust optimization must be utilized to compute $M(x,\hat{\theta})$.
\end{remark}

\begin{remark}
Several modifications can be made to \cref{eq:accm} that improve transients or robustness including the projection operator discussed in \cref{sub:racbf} (see \cite{lopez2019contraction}).
\end{remark}

\subsection{Offline Design \& Online Computation}
\label{sub:computation}
A contraction metric is computed offline via sum-of-square programming for polynomial systems \cite{manchester2017control} or by imposing \cref{eq:ccm_c1,eq:ccm_c2} at sampled points in the state space; a process called gridding.
Geodesics are computed online at each time step by solving a nonlinear program (NLP) with the current state.
Geodesics are often guaranteed to exist by Hopf-Rinow and are less expensive to compute than solving nonlinear MPC.
Given a geodesic $\gamma(s)$, the Riemannian energy can be interpreted as a CLF so a pointwise min-norm controller similar to that in \cite{primbs2000receding} can found by solving the QP
\begin{equation*}
    \begin{aligned}
        & u^* = ~\underset{u\in \mathcal{U}}\argmin ~\frac{1}{2} u^\top u \\
        & \text{s.t.} ~\gamma_s(1)^\top M(x,\hat{\theta})\dot{\hat{x}} - \gamma_s(0)^\top M(x_d,\hat{\theta})\dot{x}_d  \leq - \lambda E(\gamma(s),\hat{\theta}) 
    \end{aligned}
\end{equation*}
where $\hat{\theta}$ is the current parameter estimate, $\gamma_s$ is the geodesic speed, $\dot{\hat{x}}$ is \cref{eq:unc_dyn} but with $\hat{\theta}$, $\dot{x}_d $ is the desired dynamics for desired state $x_d$.
The stability constraint imposes $\dot{E} \leq - 2 \lambda E$ where $\dot{E}$ is the first variation of the Riemannian energy, which has a known form \cite{carmo1992riemannian}, and results in exponential convergence of the tracking error.
It is more general than the traditional Lyapunov stability as a distance-like function does not need to computed explicitly.
Safety can be directly embedded in the above QP, resulting in a single optimization for a safe stabilizing controller.

\begin{figure*}[t]
\vskip 0.1in
    \begin{subfigure}{.66\columnwidth}
         \centering
         \includegraphics[trim=150 30 150 30, clip,width=1\textwidth]{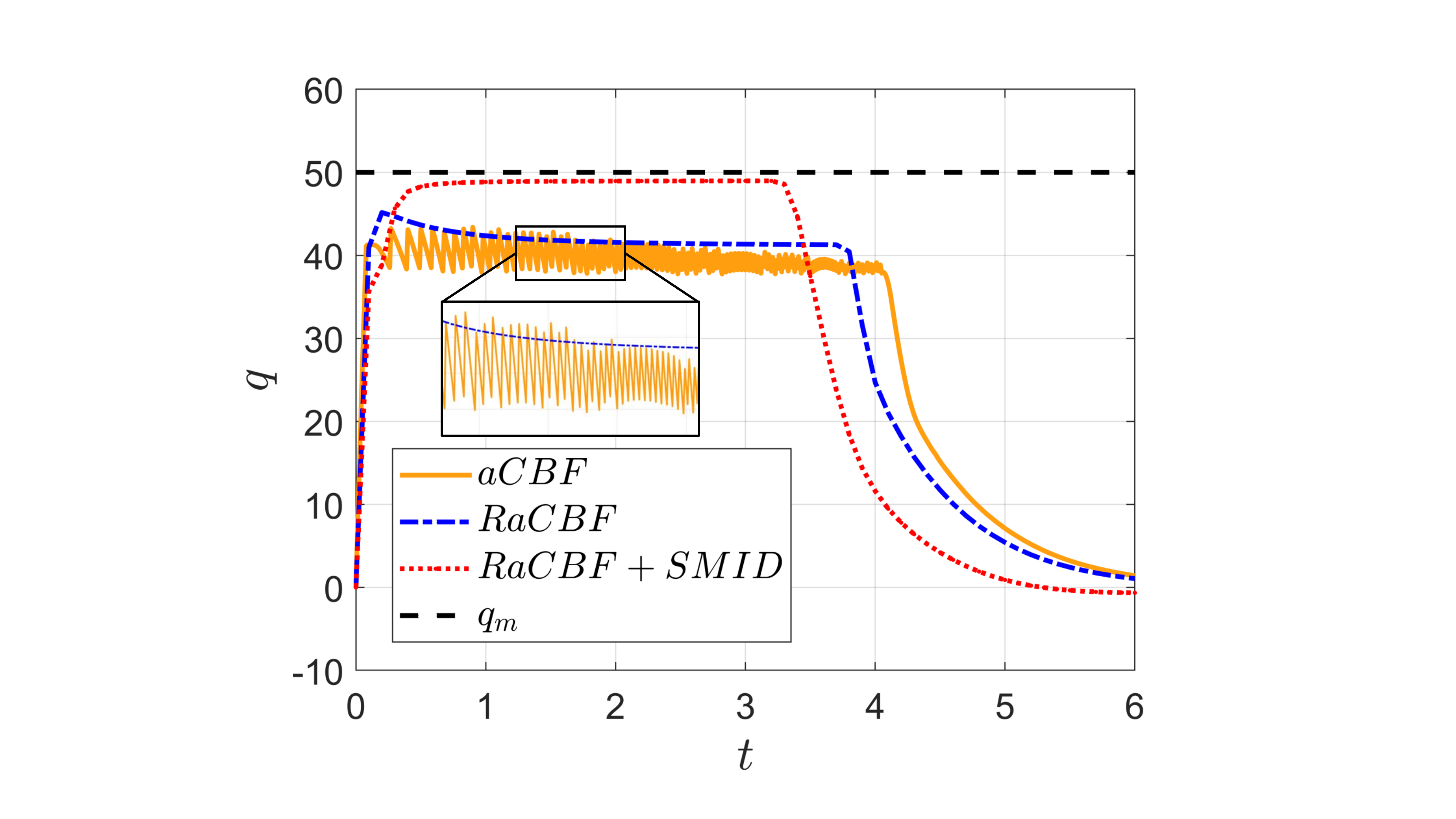}
         \caption{Pitch rate $q$.}
         \label{fig:q_i}
     \end{subfigure}
     \hfill{}
     \begin{subfigure}{.66\columnwidth}
         \centering
         \includegraphics[trim=150 30 150 30,clip, width=1\textwidth]{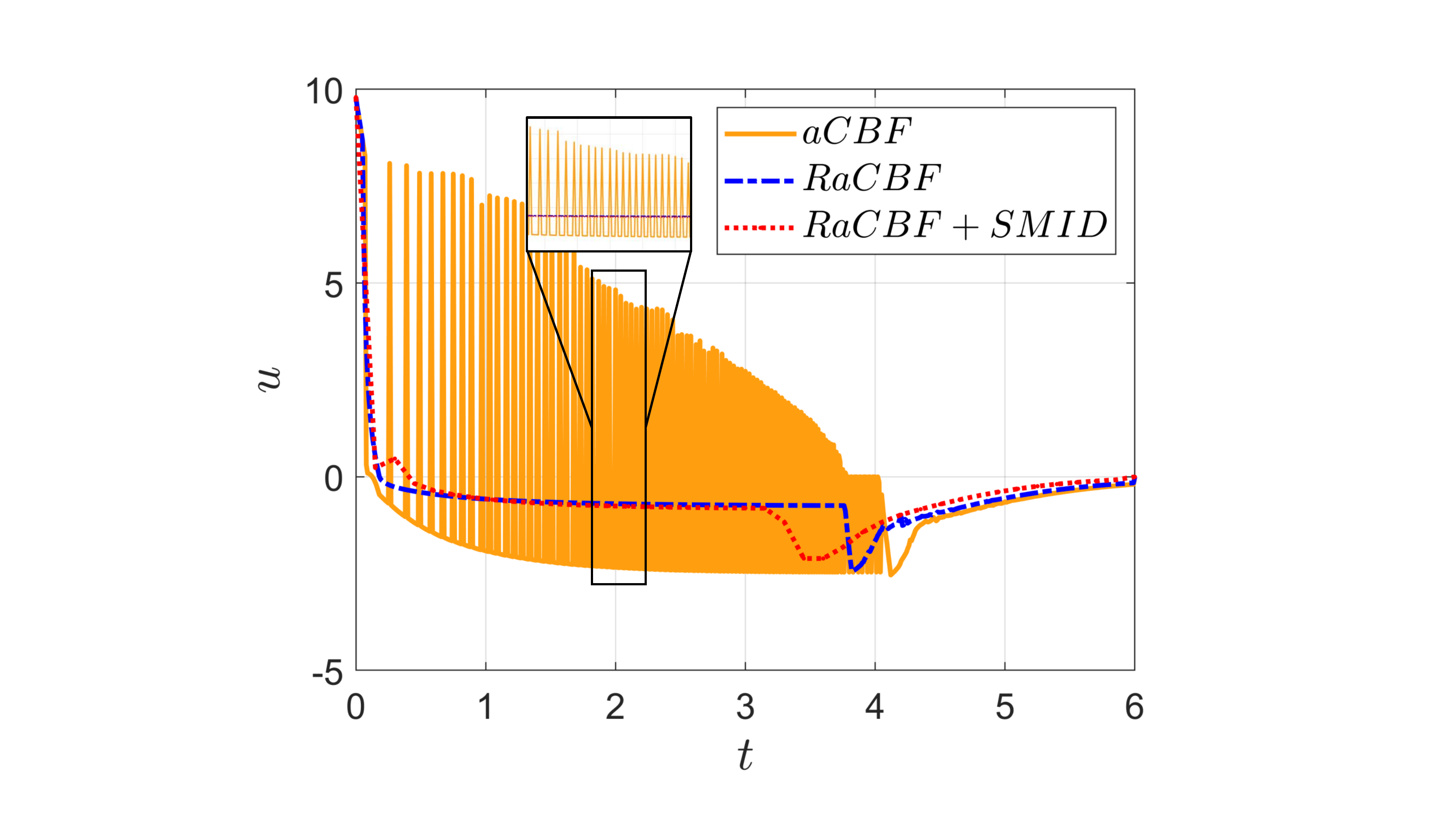}
         \caption{Control input $u$.}
         \label{fig:u_i}
     \end{subfigure}
     \hfill{}
     \begin{subfigure}{.66\columnwidth}
         \centering
         \includegraphics[trim=150 30 150 30,clip, width=1\textwidth]{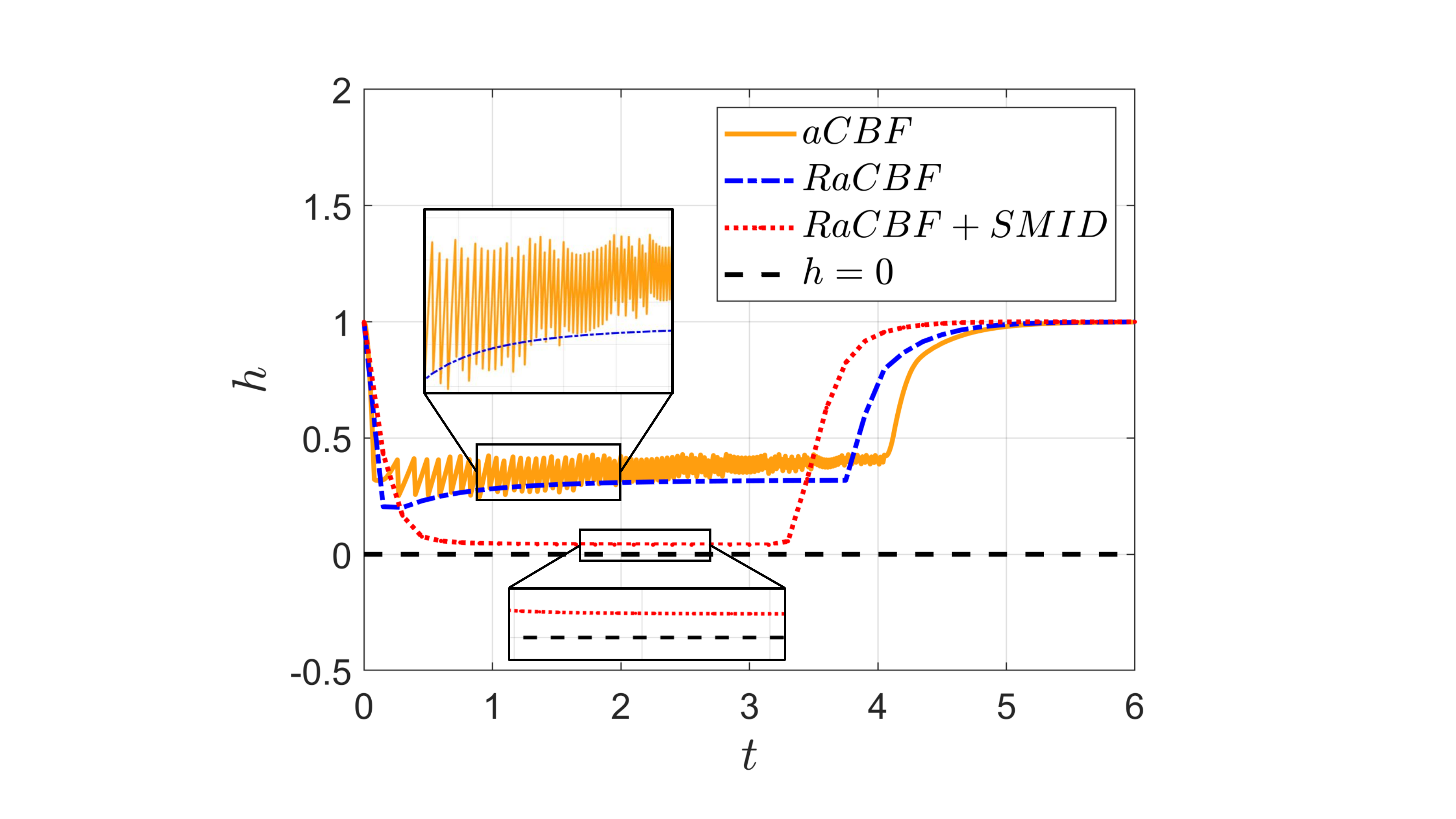}
         \caption{Barrier function $h$.}
         \label{fig:h_i}
     \end{subfigure}
     \caption{Comparison of modified aCBFs \cref{eq:acbf_m}, RaCBFs, and RaCBFs \& SMID for desired terminal state $x_d = [180^\circ~0~0]^\top$ (Immelmann turn) and maximum pitch rate $q_m$. (a): Pitch rate tracking where RaCBFs and the modified aCBFs exhibit similar conservatism due to model error.
     RaCBF \& SMID allows the aircraft to utilize 97.9\% of the maximum allowable pitch rate. `
     (b): Control chatter is observed with the modified aCBFs while RaCBFs generate continuous control inputs. (c): Safety is maintained but RaCBFs and the modified aCBFs are conservative due to potential model error. RaCBFs \& SMID permit states closer to the boundary of the safe set without losing safety guarantees.
     For tests $k_q^* = 0.2$, $\ell^*_\alpha = -1$, $\Gamma_B = 20$, $\Gamma_C = 50$, $\alpha(r) = 10r$, and $D = 0.1$.}
     \label{fig:results_immelmann}
\vskip -0.2in
\end{figure*}

\section{Adaptive \& Data-Driven Safety}
\label{sec:result}
A safe and stabilizing controller can be computed by unifying RaCBFs, SMID, and adaptive control with contraction.
The individual components of the controller are summarized below with their respective computational complexity.
\\[4pt]
\noindent \textbf{1) Compute geodesic (NLP)}
\begin{equation*}
    \gamma(s) = \underset{c(s)\in\Upsilon(x,x_d)}{\argmin}~ E(c,\hat{\theta}_C) = \int_0^1 c_s^\top M(c,\hat{\theta}_C) c_s ds
\end{equation*}

\noindent \textbf{2) Compute controller (QP \& Quadrature)}
\begin{align*}
    & \kappa  = ~\underset{u \in \mathcal{U}}{\argmin} ~ \frac{1}{2} u^\top u + r \epsilon^2 \\
    &\text{s.t.} ~\gamma_s(1)^\top M(x,\hat{\theta}_C)\dot{\hat{x}} - \gamma_s(0)^\top M(x_d,\hat{\theta}_C)\dot{x}_d  \\
    & \hspace{4.6cm} \leq - \lambda E(\gamma(s),\hat{\theta}_C)+ \epsilon \nonumber  \\
    & \hspace{.4cm} \frac{\partial h_r}{\partial x}(x,\hat{\theta}_B)\left[ f(x) - \Delta(x)^\top \Lambda(x,\hat{\theta}_B) + B(x) u\right] \\
    & \hspace{3.1cm} \geq -\alpha\left({h}_r(x,\hat{\theta}_B) - \frac{1}{2}\tilde{\vartheta}^\top \Gamma^{-1} \tilde{\vartheta}\right) \nonumber
\end{align*}
\noindent \textbf{3) Update parameters (Quadrature)} \label{item:quad}
\begin{equation*}
\begin{aligned}
    & \dot{\hat{\theta}}_C = - \Gamma_C \Delta(x) M(x,\hat{\theta}_C) \gamma_s(1) \\
    & \dot{\hat{\theta}}_B =  \Gamma_B \Delta(x) \left(\frac{\partial h_r}{\partial x}(x,\hat{\theta}_B)\right)^\top
\end{aligned}
\end{equation*}
\vskip -.03in
\noindent \textbf{4) Update parameter error bounds (LP)} \label{item:lp}
\begin{align*}
        & \Xi = \left\{ \varrho : ~ |\dot{x}_{1:k} - f_{1:k} + \Delta_{1:k}^\top \varrho - B_{1:k}\kappa_{1:k}| \leq D  \right\} \\
         &\Theta^{j+1} = \Theta^{j} \cap \Xi, ~~ \tilde{\vartheta} = \underset{\varrho_i,\forall i}{\text{sup}} ~\Theta^{j+1} - \underset{\varrho_i,\forall i}{\vphantom{\text{sup}}\text{inf}}~ \Theta^{j+1} \nonumber
\end{align*}
\vskip -.02in

The NLP in Step~1) can be efficiently solved by parameterizing geodesics with a set of polynomial basis functions. 
We adopt the same strategy as in \cite{leung2017nonlinear} and utilize the the Chebychev Pseudospectral method and Clenshaw-Curtis quadrature to compute a geodesic at each time step.
Using the geodesic computed in Step~1), the QP in Step~2) is solved to generate a safe and stabilizing controller $\kappa$.
The QP is similar to that in \cite{ames2016control} but the stability constraint is replaced with the first variation of the Riemannian energy \cite{carmo1992riemannian}.
Under the premise that \cref{eq:unc_dyn} is locally Lipschitz, one can show that $\kappa$ is guaranteed to be locally Lipschitz from \cite[Theorem 3]{ames2016control} as both the geodesic speed $\gamma_s$ and metric $M(x)$ are also locally Lipschitz from their definitions.
Note that $\dot{x}_d$ is the desired dynamics and $\dot{\hat{x}}$ is \cref{eq:unc_dyn} but with $\hat{\theta}_C$.
Step~3) is simple quadrature and is not computationally expensive.
Note that the parameter adaptation $\dot{\hat{\theta}}_C$ for the controller should be temporarily stopped when the safety constraint is active to prevent undesirable transients.
Otherwise, the parameter estimates will windup as the tracking error may increase to ensure safety.

The LP in Step~4) has $2k$ constraints and is solved $2p$ times at every time step for the upper and lower bound of each parameter. 
Set intersection is done by taking the appropriate minimum or maximum of the newest and current bounds.
The complexity of Step~4) can be bounded by either removing redundant constraints or terminating when $\tilde{\vartheta} \leq \varepsilon$ where $\varepsilon$ is a predefined threshold.
Step~4) can also be done outside the control loop since stability and safety do not rely on real-time updates of the parameter bounds; although real-time bounds are desirable to quickly eliminate conservatism.
Consequently, non-causal filtering can be used to accurately estimate $\dot{x}$ if necessary.
Moreover, the right hand side of the inequality can be replaced by $D + \mathcal{E}$ where $\mathcal{E}$ is the maximum estimation error of the rate vector, i.e., $|\dot{\hat{x}}| \leq \mathcal{E}$.
The proposed method was tested in MATLAB R2018B with the built-in solvers without any code optimization on a 1.6GHz Intel i5 processor.
The NLP was initialized with a linear curve $c(s) = (x-x_d)s+x_d$ at each time step.



\section{Illustrative Example}
\label{sec:example}
Consider the simplified pitch dynamics of an aircraft \cite{mccormick1995aerodynamics}
\begin{equation*}
    \left[ \begin{array}{c} \dot{\theta} \\ \dot{\alpha} \\ \dot{q} \end{array} \right] = \left[ \begin{array}{c}  q \\ q - \bar{L}(\alpha)  \\ -k_q q + \bar{M}(\alpha) \end{array} \right] + \left[ \begin{array}{c} 0 \\ 0 \\ 1 \end{array} \right] u,
\end{equation*}
where $\theta$, $\alpha$, and $q$ are the pitch angle, angle of attack, and pitch rate.
$\bar{L}(\alpha)$ and $\bar{M}(\alpha)$ are the aerodynamic lift and moment.
The system is not feedback linearizable as the controllability matrix drops rank at $\bar{L}'(\alpha)=0$ and is not in strict-feedback form.
Utilizing flat plat theory \cite{mccormick1995aerodynamics}, the aerodynamics of a high-performance aircraft are approximately $\bar{L}(\alpha) = 0.8\mathrm{sin}(2\alpha)$ and $\bar{M}(\alpha) = -\ell_{\alpha} \bar{L}(\alpha)$.
The parameters $k_q$ and $\ell_\alpha$ are unknown but $k_q \in [0.1~0.8]$ and $\ell_\alpha \in [-3~1]$.
A metric quadratic in $\alpha$ was synthesized via gridding for $\alpha \in [-5^{\circ}~50{^\circ}]$ and $q \in [-10^{\circ/s}~50^{\circ/s}]$.
Note that $\bar{L}'(\alpha)=0$ is in the chosen grid range.
The function $h_r(q) = q_m - q$ where $q_m = 50^{\circ/s}$ can be easily shown to be a valid RaCBF that enforces $q \leq q_m$.

The desired terminal state $x_d = [180^\circ~0~0]^\top$ corresponds to the first portion of the aerobatic maneuver known as the Immelmann turn.
The vehicle executes a half loop before executing a half roll (not considered here) resulting in level flight in the opposite direction.
\cref{fig:q_i} shows modified aCBFs and RaCBFs exhibit similar behavior in terms of conservatism as they do not utilize the maximum allowable pitch rate.
However, modified aCBFs exhibit high-frequency oscillations due to the chatter in the control input, seen in \cref{fig:u_i}; a result of the formulation as the safety constraint continuously switches between active and inactive. 
High-frequency oscillations are also seen in the barrier function in \cref{fig:h_i} but are absent with RaCBFs.
\cref{fig:q_i} shows RaCBFs with SMID results in less conservatism as 97.9\% of the maximum allowable pitch rate is utilized.
Moreover, the set of allowable states is considerably larger as is evident by the small value of the barrier function in \cref{fig:h_i}.
The parameter error bounds, shown in \cref{fig:smid_i}, were reduced by 63.0\% for $k_q$ and 90.5\% for ${\ell}_\alpha$.
\cref{fig:comp_time_i} shows the computation time is within real-time constraints and can be easily reduced by utilizing faster solvers or a lighter programming language.

\begin{figure}[t!]
    \begin{subfigure}{.47\columnwidth}
         \centering
         \includegraphics[width=1\linewidth]{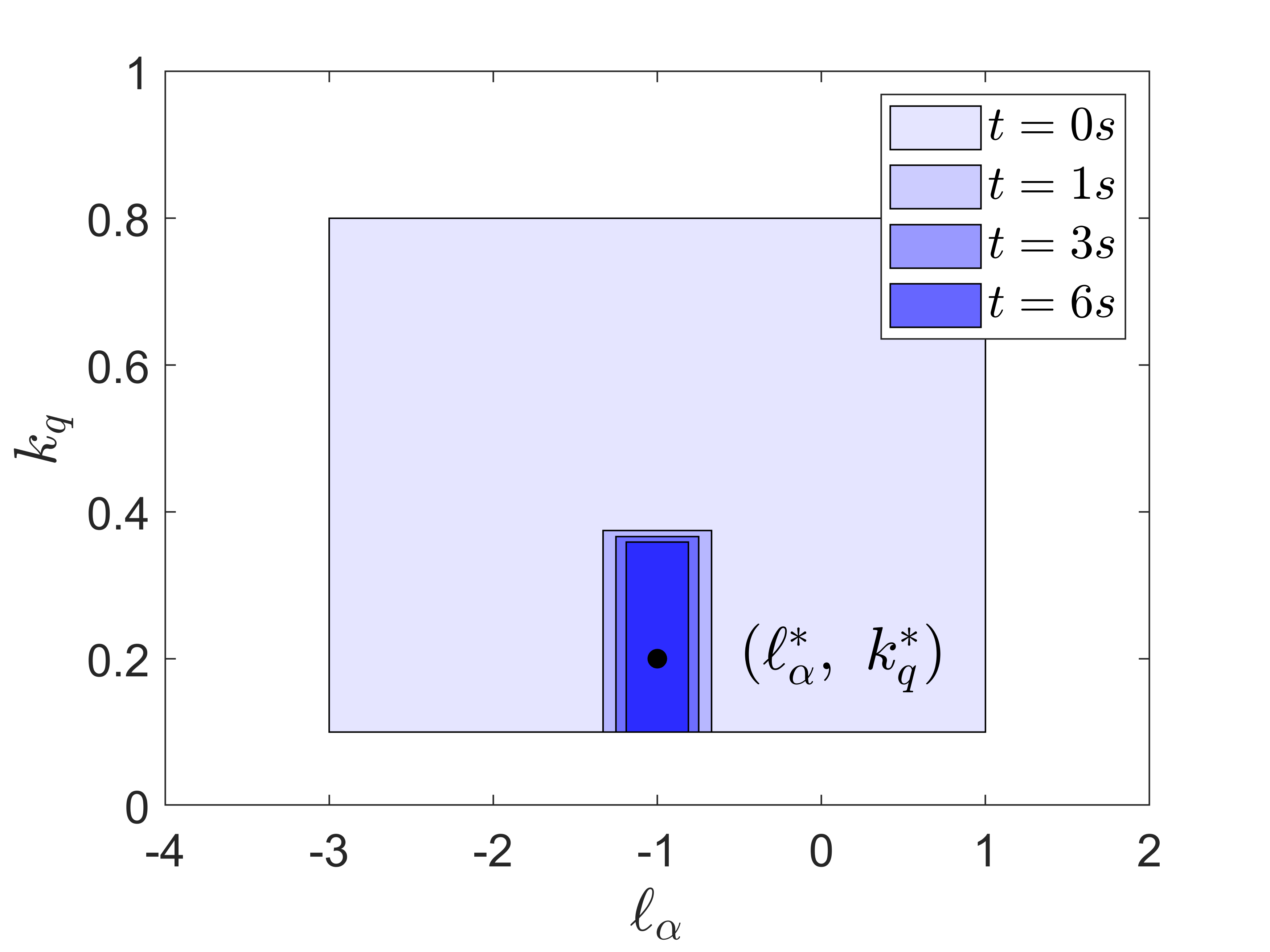}
         \caption{{Parameter bounds.}}
         \label{fig:smid_i}
     \end{subfigure}
    \hspace{0.6em}
     \begin{subfigure}{.47\columnwidth}
         \centering
         \includegraphics[width=1\linewidth]{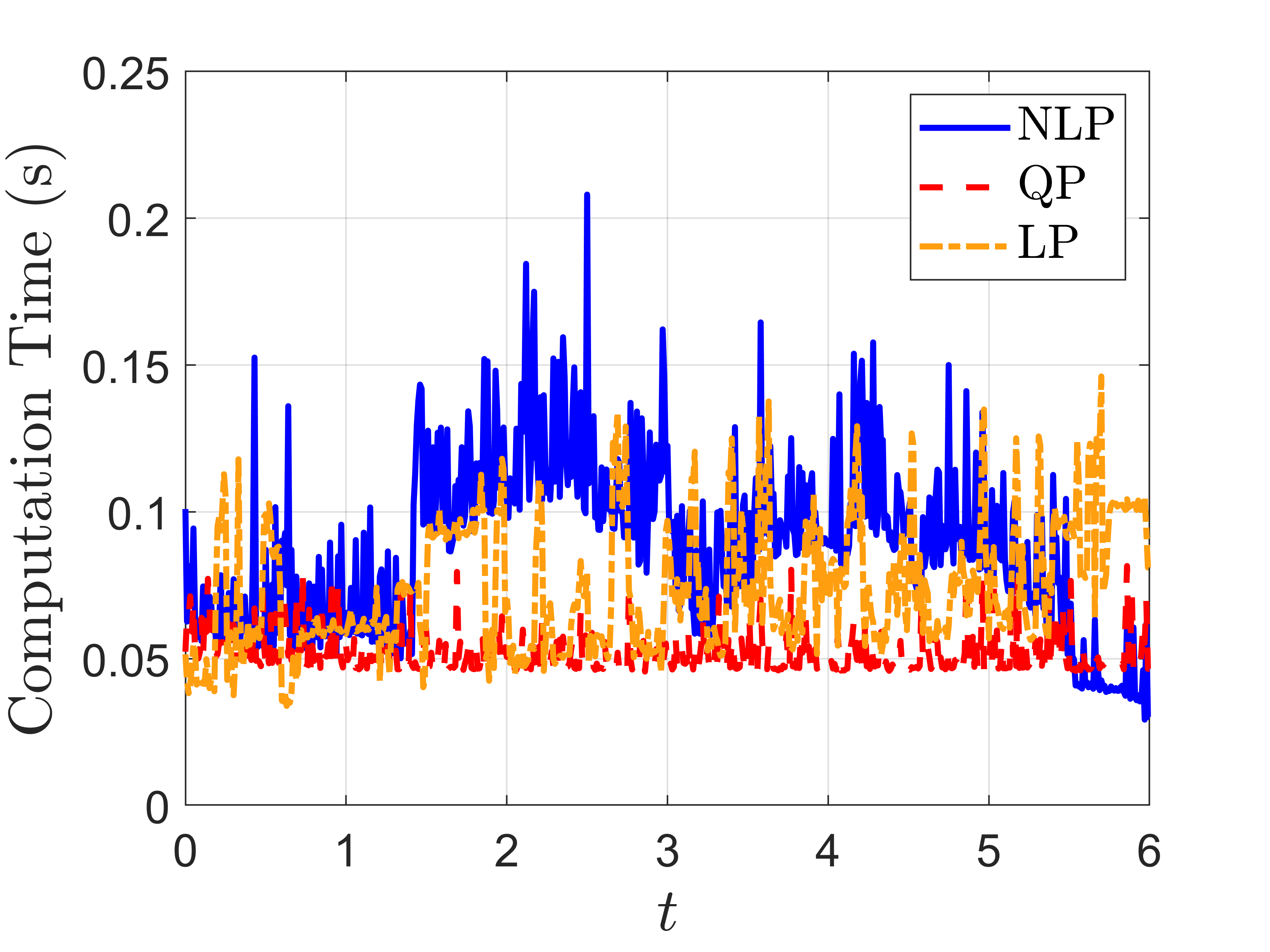}
         \caption{{Computation time.}}
         \label{fig:comp_time_i}
     \end{subfigure}
    \caption{Parameter bounds and computation time. (a): Parameter bounds monotonically approach the true parameter values $\ell_\alpha^*$ and $k_q^*$. (b): Computation time for the proposed controller is well within real-time constraints.}
    \label{fig:results_comp_i}
    \vskip -0.2in
\end{figure}

\begin{figure*}[t]
\vskip 0.1in
    \begin{subfigure}{.66\columnwidth}
         \centering
         \includegraphics[trim=150 30 150 30, clip,width=1\textwidth]{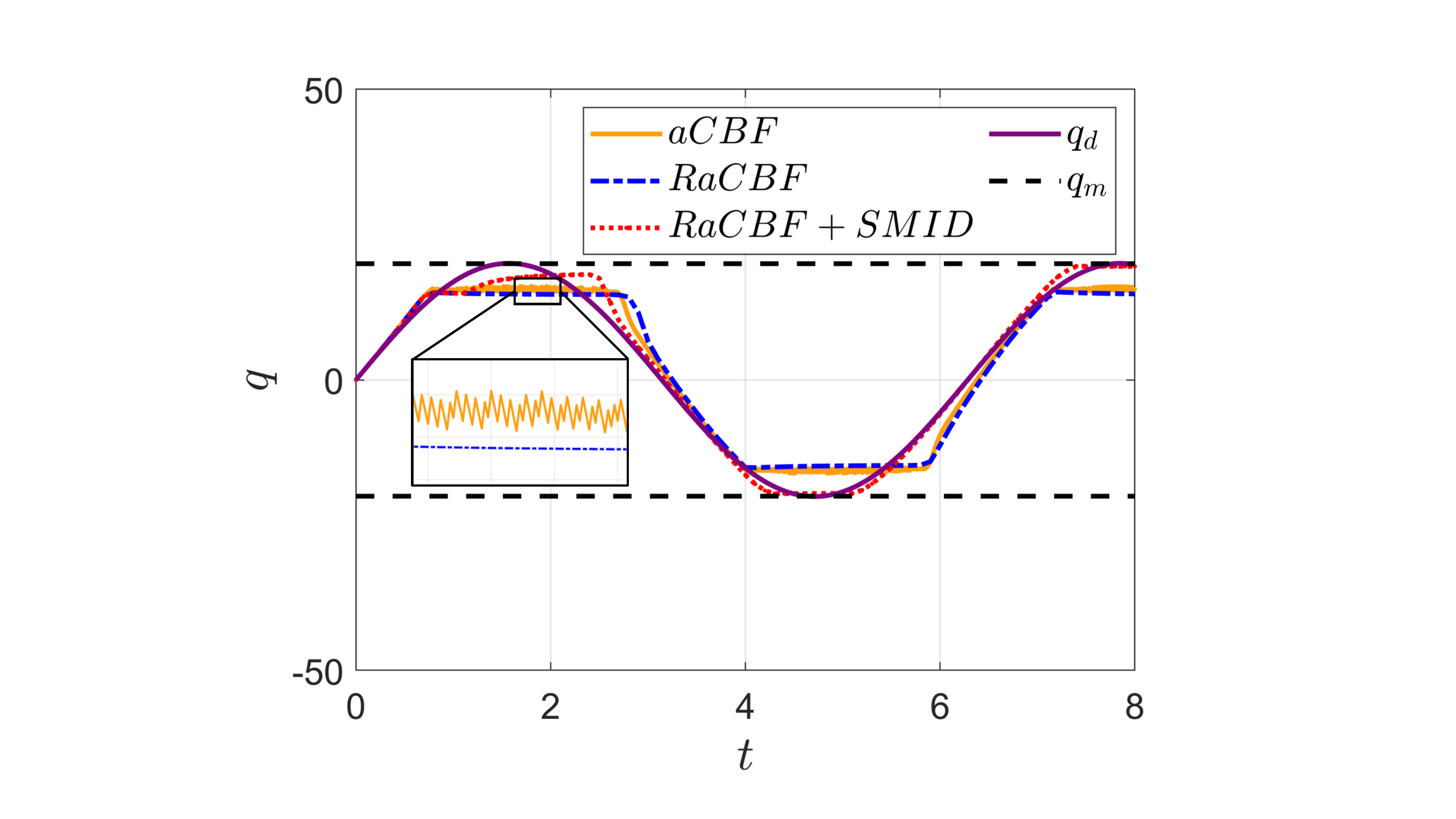}
         \caption{Pitch rate $q$.}
         \label{fig:q}
     \end{subfigure}
     \hfill{}
     \begin{subfigure}{.66\columnwidth}
         \centering
         \includegraphics[trim=150 30 150 30,clip, width=1\textwidth]{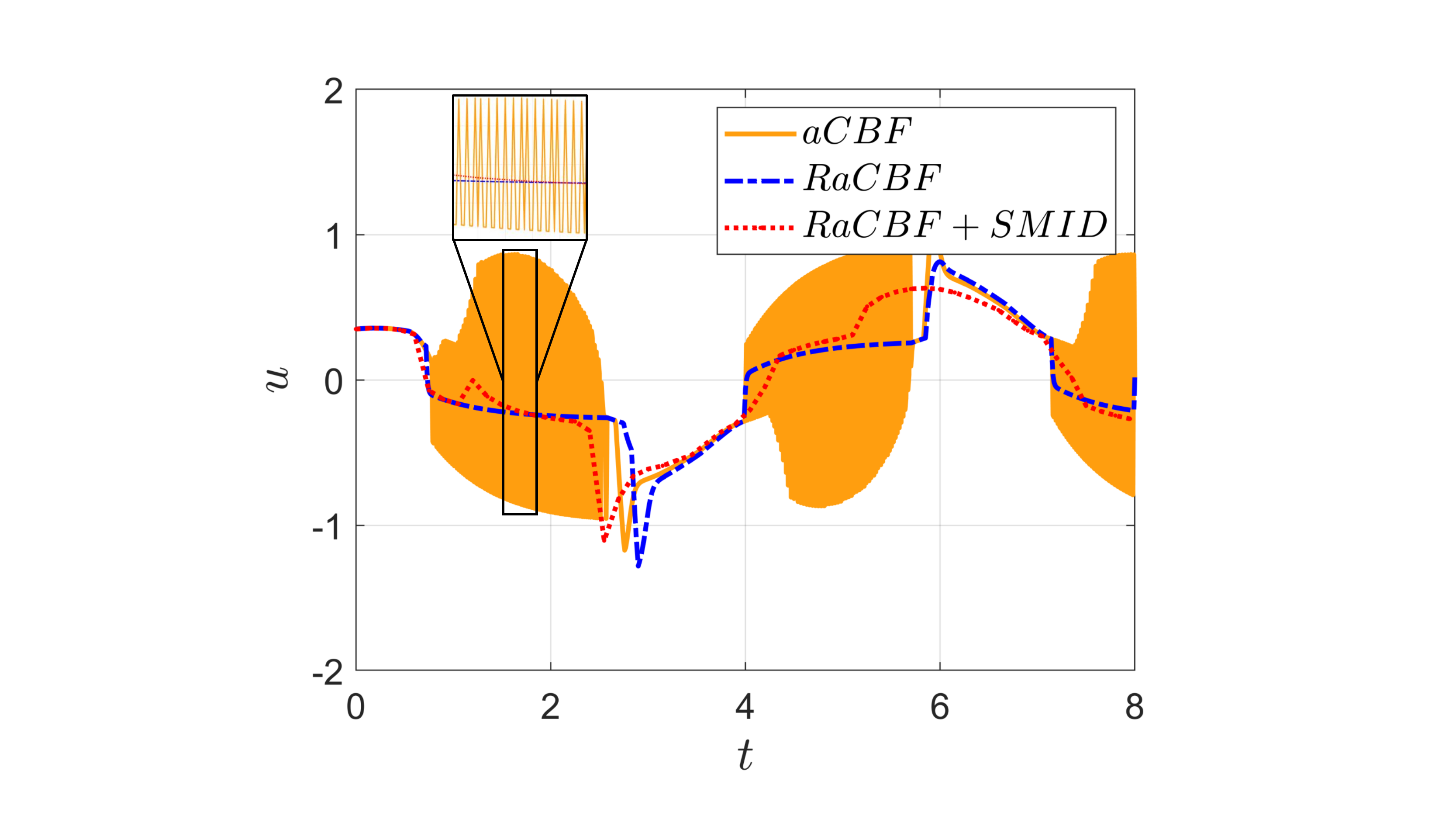}
         \caption{Control input $u$.}
         \label{fig:u}
     \end{subfigure}
     \hfill{}
     \begin{subfigure}{.66\columnwidth}
         \centering
         \includegraphics[trim=150 30 150 30,clip, width=1\textwidth]{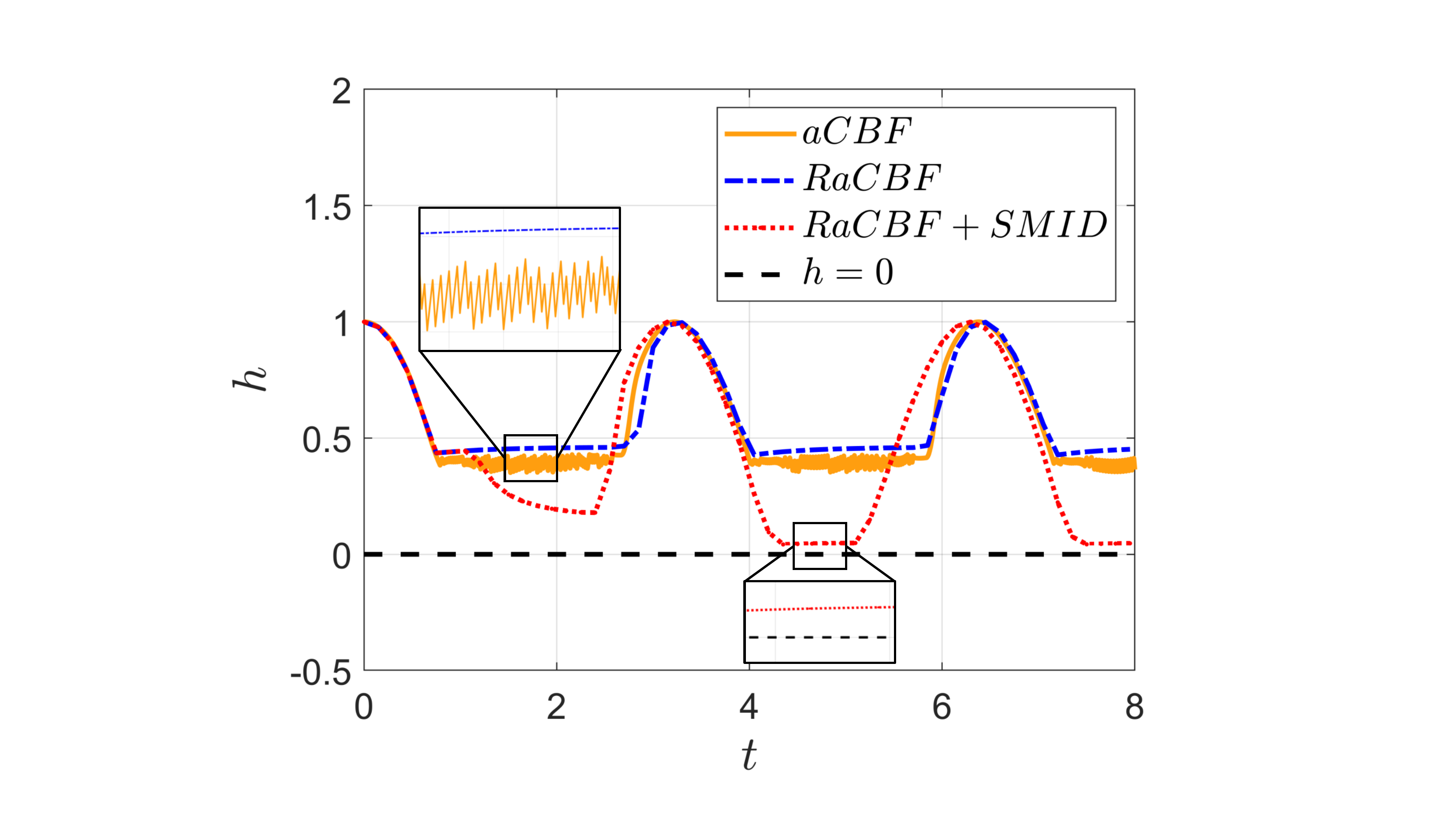}
         \caption{Barrier function $h$.}
         \label{fig:h}
     \end{subfigure}
     \caption{Comparison of modified aCBFs \cref{eq:acbf_m}, RaCBFs, and RaCBFs \& SMID with a full desired trajectory corresponding to $\theta_d = -20^\circ \mathrm{cos}(t)$. (a): Pitch rate tracking where RaCBFs and aCBFs exhibit similar conservativeness due to model error.
     RaCBF \& SMID achieves the best performance because model uncertainty is reduced via online estimation. 
     (b): Control chatter is observed with aCBFs while RaCBFs generate continuous control inputs. (c): Safety maintained but RaCBFs and aCBFs are conservative due to potential model error. RaCBFs \& SMID is the least conservative since model is estimated online.
     For tests $k^* = 0.2$, $\ell^*_\alpha = -1$, $\Gamma_B = 20$, $\Gamma_C = 50$, $\alpha(r) = 10r$, and $D = 0.1$.}
     \label{fig:results}
\vskip -0.2in
\end{figure*}

Now consider the scenario where a full desired trajectory described by $\theta_d = - 20^\circ \mathrm{cos}(t)$ is available.
A metric was synthesized for a new grid range $\alpha \in [-60^{\circ}~60{^\circ}]$ and $q \in [-20^{\circ/s}~20^{\circ/s}]$; a metric quadratic in $\alpha$ was again found to be valid over the grid range.
The function $h_r(q) = 1 - (\nicefrac{q}{q_m})^2$ where $q_m = 20^{\circ/s}$ is a valid RaCBF that enforces $|q| \leq q_m$.
The results in \cref{fig:results} show the same exact behavior as in \cref{fig:results_immelmann}: control input chattering occurs with the modified aCBFs \cref{fig:u} resulting in high-frequency oscillations in both the pitch rate \cref{fig:q} and barrier function \cref{fig:h}. 
Chattering does not occur with RaCBFs.
Additionally, RaCBFs with SMID again has the best tracking performance in \cref{fig:q} and is the least conservative in \cref{fig:h}.
The parameter bounds at different time instances are shown in \cref{fig:smid_sine}.
The bounds again monotonically decrease resulting in a reduction of 16.5\% and 77.3\% for ${k}_q$ and ${\ell}_\alpha$, respectively.
The reduction is less than that in the Immelmann turn due to the trajectory not sufficiently exciting $q$ relative to $\bar{L}(\alpha)$.
The largest reduction occurred at $t=3s$ which is when the barrier function in \cref{fig:h} becomes less conservative.
The computation time shown in \cref{fig:comp_time_sine} is comparable to that in \cref{fig:comp_time_i} with the NLP solve time being slightly less as the linear geodesic initialization is a better initial guess for small tracking error. 
This further confirms that the proposed approach can be run in real-time.

\begin{figure}[t!]
    \begin{subfigure}{.47\columnwidth}
         \centering
         \includegraphics[width=1\linewidth]{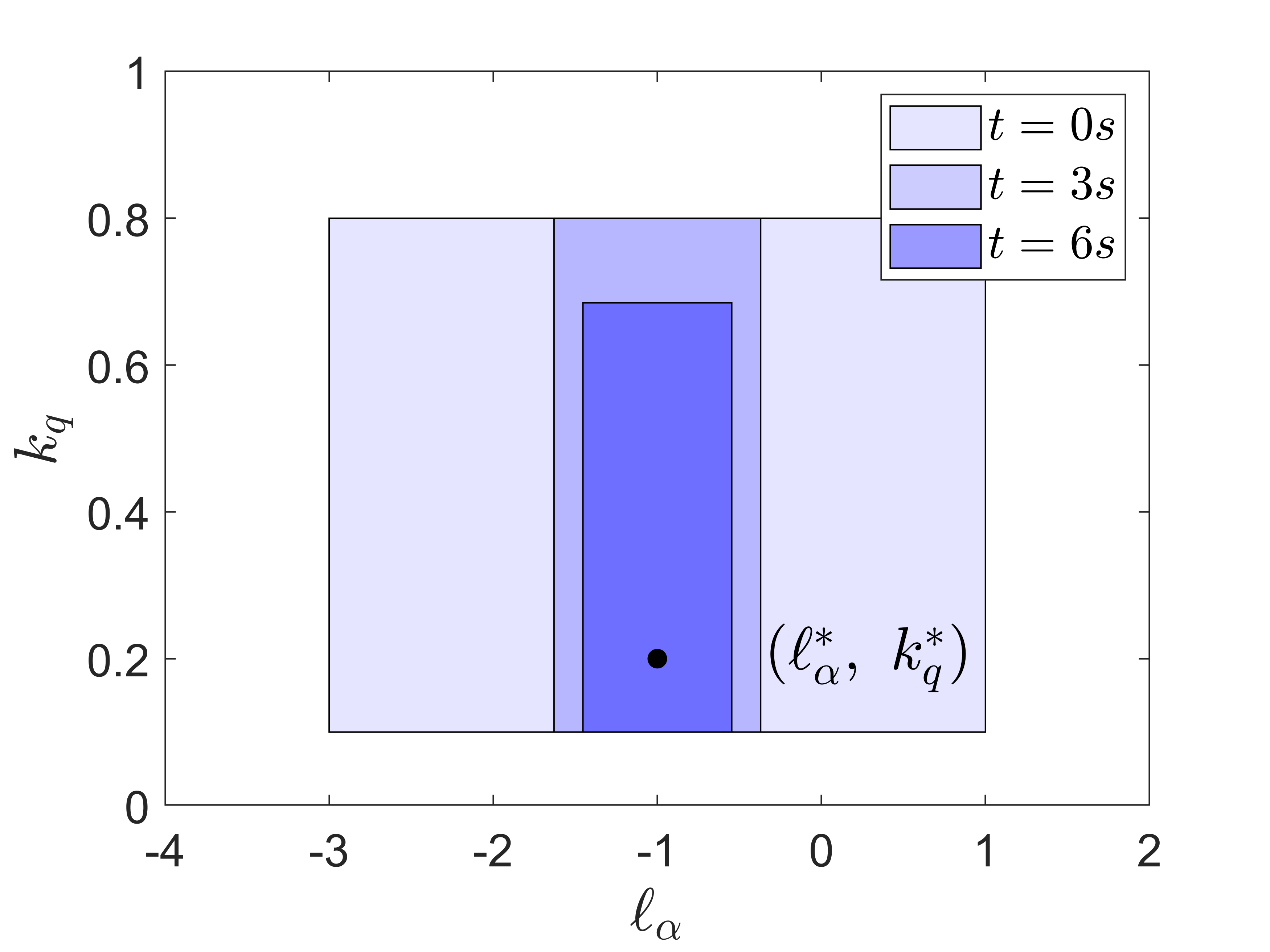}
         \caption{{Parameter bounds.}}
         \label{fig:smid_sine}
     \end{subfigure}
    \hspace{0.6em}
     \begin{subfigure}{.47\columnwidth}
         \centering
         \includegraphics[width=1\linewidth]{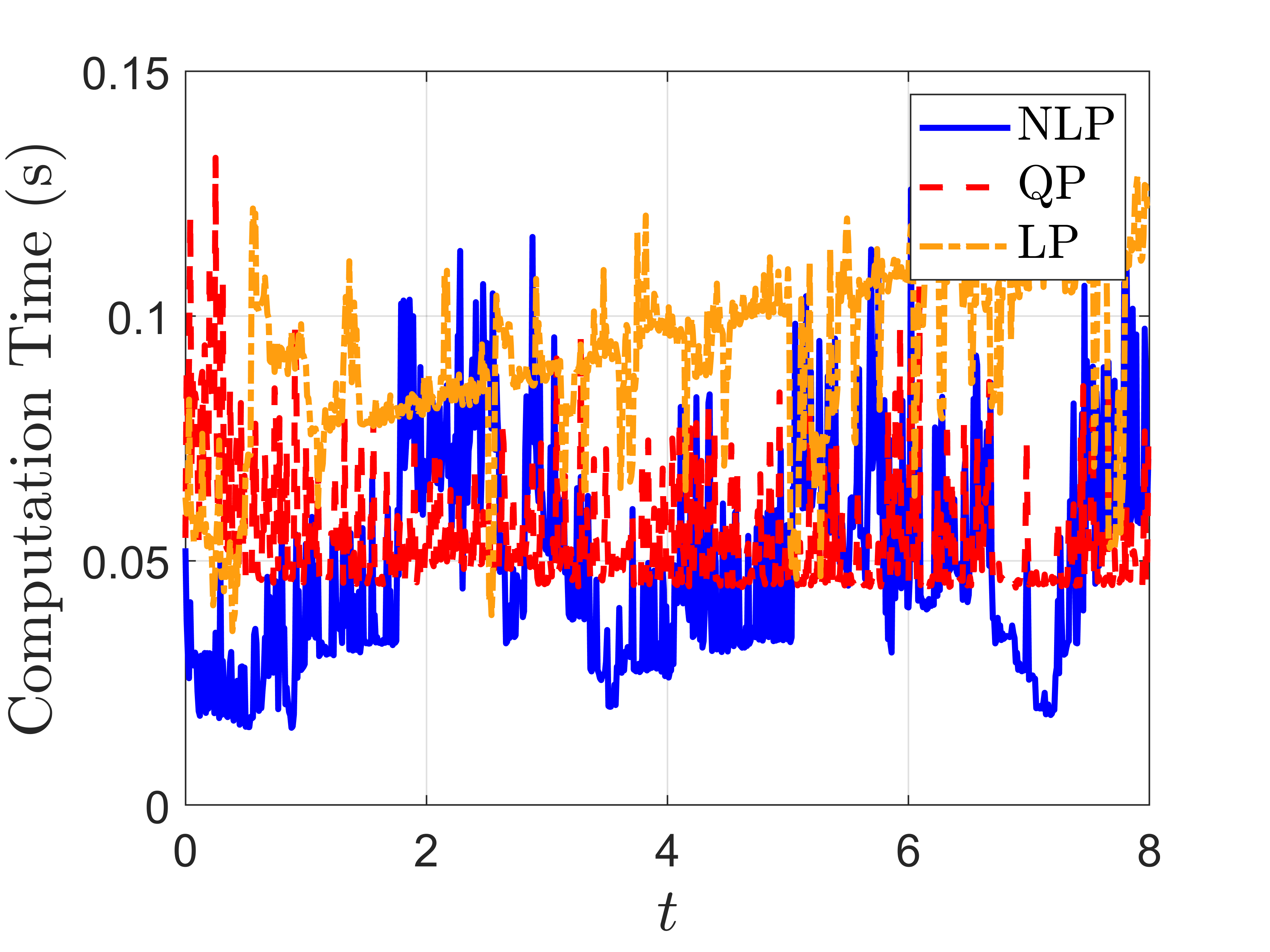}
         \caption{{Computation time.}}
         \label{fig:comp_time_sine}
     \end{subfigure}
    \caption{Parameter bounds and computation time for desired trajectory. (a): Parameter bounds monotonically approach the true parameter values $\ell_\alpha^*$ and $k_q^*$. (b): Computation time for the proposed controller is well within real-time constraints.}
    \label{fig:results_comp_sine}
    \vskip -0.2in
\end{figure}

\section{Conclusion}
This work presented a framework that guarantees safety for uncertain nonlinear systems through parameter adaptation and data-driven model estimation.
The unification with a contraction-based adaptive controller allows the approach be applied to a broad class of systems.
Extending to systems with probabilistic model bounds, non-parametric uncertainties, and external disturbances is future work.

\section*{Appendix}
\label{sec:appendix}

\begin{proof}[Proof of \cref{thm:racbf}]
Consider the composite candidate CBF $h = h_r(x,\hat{\theta}) - \frac{1}{2} \tilde{\theta}^\top\Gamma^{-1}\tilde{\theta}$,
where the minimum eigenvalue of $\Gamma$ must satisfy $\lambda_{\min}(\Gamma) \geq \frac{\|\tilde{\vartheta}\|^2}{2h_r(x_r,{\theta}_r)}$ for any $h_r(x_r,\theta_r)>0$.
Differentiating $h$ with respect to \cref{eq:unc_dyn},
\begin{equation*}
\begin{aligned}
    \dot{h} & = \dot{h}_r(x,\hat{\theta}) -  \tilde{\theta}^\top\Gamma^{-1}\dot{\hat{\theta}} \\
    &= \frac{\partial h_r}{\partial x}\left[ f(x) - \Delta(x)^\top \theta + B(x) u\right] + \frac{\partial h_r}{\partial \hat{\theta}} \dot{\hat{\theta}} - \tilde{\theta}^\top\Gamma^{-1}\dot{\hat{\theta}}  \\
\end{aligned}
\end{equation*}
Adding and subtracting $\frac{\partial h_r}{\partial x}\Delta(x)^\top\left[\hat{\theta} - \Gamma \left(\frac{\partial h_r}{\partial \hat{\theta}}\right)^\top \right]$ and using the definition of $\Lambda(x,\hat{\theta})$,

\begin{equation*}
\begin{aligned}
    \dot{h} & = \frac{\partial h_r}{\partial x}\left[ f(x) - \Delta(x)^\top \Lambda(x,\hat{\theta}) + B(x) u\right] + \frac{\partial h_r}{\partial \hat{\theta}}\dot{\hat{\theta}} \\
    & \hspace{1.5cm} - \tilde{\theta}^\top\Gamma^{-1}\dot{\hat{\theta}} + \frac{\partial h_r}{\partial x}\Delta(x)^\top\left[\tilde{\theta} - \Gamma \left(\frac{\partial h_r}{\partial \hat{\theta}}\right)^\top \right].
\end{aligned}
\end{equation*}
Choosing $\dot{\hat{\theta}} = \Gamma \Delta(x) \left(\frac{\partial h_r}{\partial x}\right)^\top$, then
\begin{equation*}
\begin{aligned}
    \dot{h} &= \frac{\partial h_r}{\partial x}\left[ f(x) - \Delta(x)^\top \Lambda(x,\hat{\theta}) + B(x) u\right] \\
    & \geq - \alpha\left( h_r - \frac{1}{2}\tilde{\vartheta}^\top \Gamma^{-1} \tilde{\vartheta}  \right) \geq -\alpha(h),
\end{aligned}
\end{equation*}
where the first inequality is obtained via the definition of a RaCBF and the second by noting $|\tilde{\theta}| \leq \tilde{\vartheta}$ so $h = h_r - \frac{1}{2}\tilde{\theta^\top}\Gamma^{-1}\tilde{\theta} \geq h_r - \frac{1}{2}\tilde{\vartheta^\top}\Gamma^{-1}\tilde{\vartheta}$. 
Since $h\geq0$ and $h_r \geq h$ $\forall t$, then $h_r \geq \frac{1}{2}\tilde{\vartheta^\top}\Gamma^{-1}\tilde{\vartheta}  \geq 0$ and $\mathcal{C}^r_\theta$ is forward invariant.
\end{proof}

\begin{proof}[Proof of \cref{thm:racbf_smid}]
Since the model uncertainty is changing via estimation, the maximum allowable parameter error is time varying, i.e., $\tilde{\vartheta}(t)$.
From \cref{lemma:smid}, $\tilde{\Theta}$ monotonically decreases so $\dot{\tilde{\vartheta}} \leq 0$. 
Let $h_r$ be a candidate RaCBF, then $\dot{{h}} = \dot{h}_{r} - \tilde{\vartheta}\Gamma^{-1}\dot{\tilde{\vartheta}}  \geq \dot{h}_{r}$ since $\dot{\tilde{\vartheta}} \leq 0$ for all $t$.
Inequality \cref{eq:racbf} in Definition~\ref{definition:racbf} is then still satisfied for $\dot{\tilde{\vartheta}} \leq 0$.
Using the steps in \cref{thm:racbf}, the system is safe with respect to $\mathcal{C}^r_{\hat{\theta}}$.
Moreover, since $\dot{\tilde{\vartheta}} \leq 0$ then $\tilde{\vartheta} \rightarrow 0 $ so $\mathcal{C}^r_{\hat{\theta}} \rightarrow \mathcal{C}_{\hat{\theta}}$.
\end{proof}


\noindent \textbf{Acknowledgements}  
We thank David Fan for stimulating discussions.
This work was supported by the NSF Graduate Research Fellowship Grant No. 1122374.


\balance
\bibliographystyle{ieeetr}
\bibliography{ref}

\end{document}